\newcommand{\ketbra}[2]{\left|{#1}\rangle\!\langle{#2}\right|}
\newcommand{\black}{\color[rgb]{0,0,0}}
\renewcommand{\l}{\left}
\renewcommand{\r}{\right}
\renewcommand{\url}[1]{}
\renewcommand{\href}[1]{}
\begin{document}

\title{Qudit hypergraph states and their properties}
\author{Fei-Lei Xiong}
\author{Yi-Zheng Zhen}
\author{Wen-Fei Cao}
\author{Kai Chen}
\author{Zeng-Bing Chen}
\email{zbchen@ustc.edu.cn}

\affiliation{Hefei National Laboratory for Physical Sciences at Microscale and Department of Modern Physics, University of Science and Technology of China, Hefei, Anhui 230026, China\\}
\affiliation{CAS Center for Excellence and Synergetic Innovation Center of Quantum Information and Quantum Physics, University of Science and Technology of China, Hefei, Anhui 230026, China\\}

\begin{abstract}
Hypergraph states, a generalization of graph states, constitute a large class of quantum states with intriguing non-local properties and have promising applications in quantum information science and technology. In this paper, we generalize hypergraph states to qudit hypergraph states, i.e., each vertex in the generalized hypergraph (multi-hypergraph) represents a $d$-level quantum system instead of a qubit. It is shown that multi-hypergraphs and $d$-level hypergraph states have a one-to-one correspondence. We prove that if one part of a multi-hypergraph is connected with the other part, the corresponding subsystems are entangled. More generally, the structure of a multi-hypergraph reveals the entanglement property of the corresponding quantum state. Furthermore, we discuss their relationship with some well-known state classes, e.g., real equally weighted states and stabilizer states. These states' responses to the generalized $Z$ ($X$) operations and $Z$ ($X$) measurements are studied. The Bell non-locality, an important resource in fulfilling many quantum information tasks, is also investigated.
\end{abstract}
\pacs{03.67.-a, 03.67.Ac, 03.65.Ud}




\maketitle
\section{INTRODUCTION}
\label{sec:1}
In quantum information science and technology, graph states constitute  an almost \black unique family of states for their appealing properties and applications~\cite{Raussendorf01,Raussendorf03,Hein04,Aschauer05,Schlingemann01,Schlingemann02,Schlingemann03,Zhou03,Hall07,Sean06,Looi08}. They can be used to implement one-way quantum computation~\cite{Raussendorf01} and construct quantum codes~\cite{Schlingemann01,Schlingemann02,Schlingemann03}. Moreover, they can be used to characterize many kinds of widely used entangled states, such as cluster states~\cite{Briegel01},  the \black Greenberger-Horne-Zeilinger (GHZ) states~\cite{greenberger1990bell} and more generally, stabilizer states~\cite{Gottesman96, Gottesman98}. To make quantum states of suitable physical systems describable in the framework as that of graph states, Ref.~\cite{Ionicioiu12} introduced an axiomatic method. Later, Ref.~\cite{Qu13, Rossi13} generalized this approach and introduced a new class of quantum states named hypergraph states.

Like graph states, given a hypergraph, one can define  an associated \black qubit hypergraph state, i.e., hypergraphs can be encoded into quantum states \cite{Rossi13, Qu13}. Besides this feature, every qubit hypergraph state corresponds to a stabilizer group~\cite{Gottesman96, Gottesman98}. However, generally speaking, the stabilizers are no longer products of local operators \cite{Rossi13}. As a new class of quantum states, they possess lots of new properties, e.g., local unitary symmetries~\cite{QuMa13,Chen14,David15,Guhne14}, entanglement properties~\cite{QuLi13,Qu14,Qu15,Guhne14} and non-local properties~\cite{Guhne2005,Toth2006,Guhne14,Gachechiladze16}. Besides these fundamental properties, these states also have many applications. Qubit hypergraph states are \emph{real equally weighted state}s \cite{Bru11,Qu12},
which have important applications in Grover~\cite{Grover96} and Deutsch-Jozsa~\cite{Deutsch92} algorithms. Recently, Ref.~\cite{Makmal14} has shown that, if one has a black box that can tell whether an input qubit hypergraph state is a product state, he/she can solve the NP-complete SAT problem efficiently~\cite{gary1979computers}. Fully connected $k$-uniform qubit hypergraph states, a generalization of GHZ states, are applicable in Heisenberg-limited quantum metrology with more robustness to noise and particle losses~\cite{Rossi13, Gachechiladze16}. Superior to one-way quantum computation based on graph states, measurement-based quantum computation with qubit hypergraph states is non-adaptive, making the measurement scheme simpler~\cite{hoban2011non}.

In this paper, we implement the concept of multi-hypergraphs \cite{berge1973graphs} and encode these multi-hypergraphs into multi-qudit quantum states. We investigate the relationship between the multi-hypergraphs and their corresponding qudit hypergraph states, mainly about the map from the set of multi-hypergraphs to the set of qudit hypergraph states, and the relationship between the connectivity and entanglement. We investigate the local unitary transformations and local measurements on qudit hypergraph states, for  pushing \black their potential applications in error-correcting quantum codes and quantum computation. The Bell non-locality, a useful resource in quantum computation and quantum high precision measurement, is also studied. Furthermore, a systematic approach for the experimental detection is provided.

The paper is organized as follows: In Sec.~\ref{sec:2}, we give some preliminary knowledge of hypergraph and qubit hypergraph states, and explain related terminologies. We then generalize these concepts to represent a larger class of quantum states, which we call qudit hypergraph states, using a similar formalism. We show how the notion hypergraph should be modified when each vertex represents a qudit instead of a qubit. In Sec.~\ref{sec:3}, we discuss  the \black connection between the generalized hypergraphs and the qudit hypergraph states, including the correspondence and relationship between the connectivity and the entangled properties.  In Sec. \ref{sec:?}, we discuss the relationship among qudit hypergraph states and some well-known state classes, like real equally weighted states, qudit graph states and stabilizer states. \black In Sec.~\ref{sec:4}, qudit hypergraph states' responses to local $Z$ and $X$ operations (measurements) are investigated. In Sec.~\ref{sec:5}, we investigate the Bell non-locality~\cite{bell1964einstein,brunner2014bell} of $N$-uniform qudit hypergraph states and expound a general detection scheme for illustrating the Bell non-locality of a general qudit hypergraph state. Conclusions are drawn in Sec.~\ref{sec:6}.

\section{multi-hypergraphs and qudit hypergraph states}
\label{sec:2}
In this section, we will introduce some preliminary knowledge of hypergraph and qubit hypergraph state and propose our main generalization of these concepts. Some important properties of qubit hypergraph states and qudit hypergraph states will be discussed.

\subsection{Preliminary: hypergraphs and qubit hypergraph states}
\label{2A}
A hypergraph $H$ is composed of a set of vertices $V$ and a set of hyperedges $E$ \cite{Qu13,Rossi13,Gachechiladze16}, i.e., $H=(V,E)$ (For simplification, in this subsection, $H$ represents such a hypergraph.). Suppose that the vertices are labeled as $1,2,\cdots,N$, then $V=\{1, 2, \cdots, N\}$. Unlike the edges defined in standard graphs, hyperedges in hypergraphs may connect more (or less) than 2 vertices, i.e., elements in $E$ has a form $e=\{k_1, k_2, \cdots, k_{|e|}\}$, where $k_1, k_2, \cdots, k_{|e|}$ are the vertices connected by $e$, and $|e|$, the cardinality of $e$, ranges from 0 to $N$. If all the hyperedges
in $H$ are of the same cardinality $k$, then $H$ is called $k$-uniform \cite{Rossi13}. Standard graphs are in fact 2-uniform hypergraphs. Some examples of hypergraphs are shown in Fig. \ref{fig:1:subfig}.

Hypergraphs can be encoded into a class of quantum states named qubit hypergraph states, in which every vertex represents a two-level quantum system whose computational basis is $\{|0\rangle, |1\rangle\}$. The operator corresponding to the hyperedge $e=\{k_1, \cdots, k_{|e|}\}$ is defined as
\begin{equation}\label{eqn:Ce}
C_e=\begin{cases}
\begin{array}[t]{cc}
-1 & |e|=0,\\
Z  & |e|=1,\\
\underset{i_{k_1},\cdots,i_{k_{|e|}}=0}{\overset{1}{\sum }} {(-1)}^{i_{k_1}\cdots i_{k_{|e|}}} \hat{\Pi}_{i_{k_1} \cdots i_{k_{|e|}}}  & |e|\geq 2,
\end{array}\end{cases}
\end{equation}
where $\hat{\Pi}_{i_{k_1} \cdots i_{k_{|e|}}}=|i_{k_1}\cdots i_{k_{|e|}}\rangle \langle i_{k_1}\cdots i_{k_{|e|}}|$ and $i_{k_1}$, $\cdots$, $i_{k_{|e|}}$ denotes the value of the vertices $k_1, \cdots, k_{|e|}$, respectively. The qubit hypergraph state corresponding to $H$ is
\begin{equation}\label{eqn:H}
|H\rangle=\underset{e\in E}{\prod}C_e{\ket{+}}^{\otimes N},
\end{equation}
where $|+\rangle=(|0\rangle+|1\rangle)/\sqrt{2}$. The state $|H\rangle$ can be interpreted as applying a series of $C_e$ operations to $|+\rangle ^{\otimes N}$. As all the $C_e$s are commutative with respect to each other, the order of the operations makes no difference, and a hypergraph corresponds to a definite qubit hypergraph state (see Fig. \ref{fig:1:subfig} for the examples).

The same as the graph states, qubit hypergraph states can also be characterized within the framework of stabilizers. Define a set of operators
\begin{equation}\label{equation3}
g_k=\Big(\prod _{e\in E} C_e\Big) X_k \Big(\prod _{{e'}\in E} C_{e'}\Big)^{\dagger}=X_k\prod _{\{e|k\in e, e\in E\}} C_{e\backslash \{k\}},
\end{equation}
where $X_k$ is the Pauli-$X$ operator of the $k$th vertex, then (see Fig.~\ref{fig:1:subfig} for the examples)
\begin{equation}\label{equation4}
g_k|H\rangle=|H\rangle.
\end{equation}
Because
\begin{equation}
[g_k,g_{k'}]=\Big(\prod _{e\in E} C_e\Big) [X_k,X_{k'}] \Big(\prod _{{e'}\in E} C_{e'}\Big)^{\dagger }=0,
\end{equation}
the set $\{g_k |k \in V\}$ can generate an Abelian cyclic group called the stabilizer group of $|H\rangle$. Either $\{g_k|k\in V\}$ or the stabilizer group can determine a qubit hypergraph state up to a phase factor~\cite{Hein06,Rossi13}.

\begin{figure}[t]
\includegraphics[width =8.5 cm]{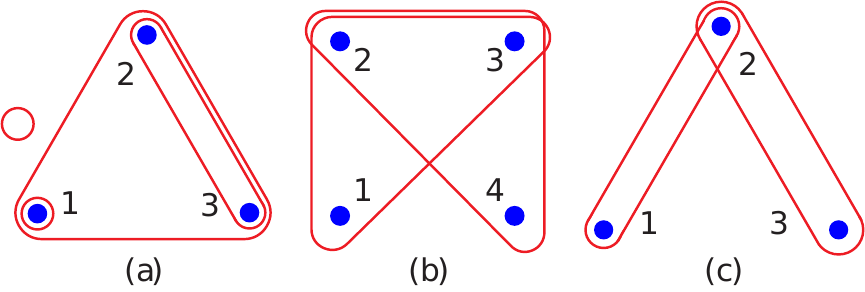}
\caption{Examples of hypergraphs. (a) A common hypergraph with $E=\{\emptyset,\{1\},\{2,3\},\{1,2,3\}\}$ (The red circle in the left represents an empty hyperedge). The corresponding qubit hypergraph state is $|\psi\rangle=(-|000\rangle-|001\rangle-|010\rangle+|011\rangle+|100\rangle+|101\rangle+|110\rangle+|111\rangle)/\sqrt{8}$, whose stabilizer group is generated by $X_1 C_{\emptyset} C_{\{2,3\}}$, $X_2 C_{\{3\}} C_{\{1,3\}}$ and $X_3 C_{\{2\}} C_{\{1,2\}}$. (b) Hypergraph with $E=\{\{1,2,3\},\{2,3,4\}\}$. As the cardinalities of the hyperedges are both 3, this hypergraph is a 3-uniform hypergraph, a generalization of conventional graph. (c) A hypergpraph with $E=\{\{1,2\},\{2,3\}\}$. Because the hyperedges are both of cardinality 2, this hypergraph reduces to a conventional graph and the corresponding qubit hypergraph state becomes a conventional graph state.} \label{fig:1:subfig}
\end{figure}

Qubit hypergraph states have interesting properties and important applications. The formalism offers a systematically pictorial representation of the \emph{real equally weighted state}s, which is a vivid way of demonstrating entanglement~\cite{Rossi13}. The entanglement and Bell non-locality make this class of quantum states have a broad range of applications in quantum computation and quantum metrology~\cite{Guhne14, Gachechiladze16}.

\subsection{Multi-hypergraphs and qudit hypergraph states}
\label{2B}

Multi-hypergraph, whose hyperedge can have a multiplicity larger than 1, is a generalization of hypergraph (see Fig. \ref{fig:2:subfig} for the examples). A multi-hypergraph whose vertices represent $d$-level quantum systems can be denoted as $H_d=\left(V, E\right)$, where $V=\left\{1, 2, \cdots, N\right\}$ is the set of vertices, and $E$ is a multiset of the hyperedges. The times an element $e$ occurs in $E$ is called multiplicity of $e$ and is denoted as $m_e$ $(m_e \in \{1,2,\cdots,d-1\})$~\cite{Looi08,Ionicioiu12}. For the $e$ that satisfies $e\in 2^V$ ($2^V$ denotes the power set of $V$, which constitutes of all the subsets of $V$) and $e\notin E$, its multiplicity $m_e$ is defined to be 0. With this generalization, every $H_d$ is associated with a definite multiplicity function $e \rightarrow m_e$, here $e\in 2^V$ and $m_e\in\{0,1,\cdots,d-1\}$. In the following, if not particularly specified, $H_d$ refers to such a multi-hypergraph, and the multiplicity of $e$ is denoted as $m_e$.

\begin{figure}
\includegraphics[width =8.5 cm]{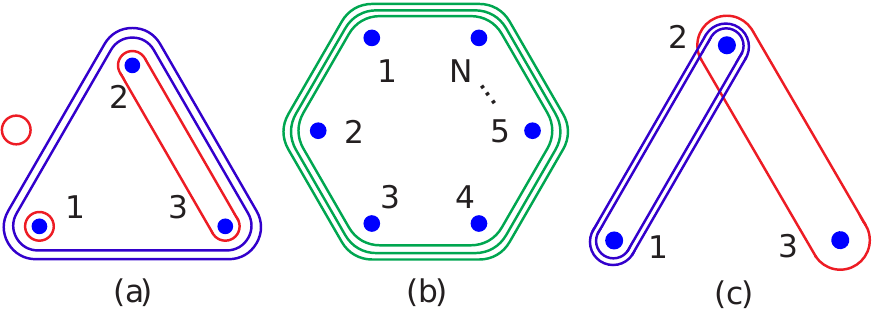}
\caption{Examples of multi-hypergraphs. Hyperedges with different multiplicities are drawn in different colors. (a) A common multi-hypergraph with $E=\{\emptyset,\{1\},\{2,3\},\{1,2,3\},\{1,2,3\}\}$, i.e., $m_{\emptyset}=1$, $m_{\{1\}}=1$, $m_{\{2,3\}}=1$, $m_{\{1,2,3\}}=2$, otherwise $m_e=0$. (The red circle in the left represents an empty hyperedge.) Suppose each vertex represents a qutrit, then the corresponding quantum state is $|H_3\rangle=\sum_{i_1,i_2,i_3=0}^{2} \omega_3^{1+i_1+i_1 i_2+2 i_1 i_2 i_3}|i_1 i_2 i_3\rangle /\sqrt{27}$, where $\omega_3=e^{ i 2 \pi/3}$. The stabilizer group is generated by $X_1 C_{\emptyset}^{\dagger} (C_{\{2,3\}}^{\dagger})^2$, $X_2 C_{\{3\}}^{\dagger} (C_{\{1,3\}}^{\dagger})^2$, and $X_3 C_{\{2\}}^{\dagger} (C_{\{1,2\}}^{\dagger})^2$. (b) An $N$-vertex multi-hypergraph with $m_{\{1,2,\cdots,N\}}=3$ (otherwise, $m_e=0$). This multi-hypergraph is symmetric in permutation of vertices. When encoding this multi-hypergraph into a quantum state, each vertex represents a quantum system whose dimension is larger than 3. (c) A multi-hypergraph with $E=\{\{1,2\},\{1,2\},\{2,3\}\}$. Because all the hyperedges in $E$ are of cardinality 2, this multi-hypergraph is in fact a conventional multi-graph that can be encoded into qudit graph states.}
\label{fig:2:subfig}
\end{figure}

Now we define qudit hypergraph states corresponding to $H_d$. Suppose the computational basis of each vertex is $\{|0\rangle, |1\rangle, \cdots, |d-1\rangle\}$, then in this basis the generalized Pauli-$X$ and Pauli-$Z$ operators are~\cite{Zhou03,Hall07,Sean06,Looi08}
\begin{equation}
\label{eq:XZ}
X=\left[
\begin{array}{ccccc}
 0 & 1 & 0 & \cdots  & 0 \\
 0 & 0 & 1 & \cdots  & 0 \\
 \vdots  & \vdots  & \vdots  & \ddots & \vdots  \\
 0 & 0 & 0 & \cdots  & 1 \\
 1 & 0 & 0 & \cdots  & 0
\end{array}
\right],\\
Z=\left[
\begin{array}{ccccc}
 1 & 0 & 0 & \cdots  & 0 \\
 0 & \omega_d  & 0 & \cdots  & 0 \\
 0 & 0 & \omega ^2_d & \cdots  & 0 \\
 \vdots  & \vdots  & \vdots  & \ddots & \vdots  \\
 0 & 0 & 0 & \cdots  & \omega ^{d-1}_d
\end{array}
\right],
\end{equation}
in which $\omega_d=e^{{i  2 \pi}/{d}}$ and $X Z=\omega_d Z X$ (Manin's quantum plane algebra \cite{ge1992cyclic}). The operator corresponding to the hyperedge $e=\{k_1, k_2, \cdots, k_{|e|}\}$ is defined as
\begin{equation}\label{eq2}
C_e=\begin{cases}
\begin{array}[t]{ccc}
\omega_d & |e|=0,\\
Z  & |e|=1,\\
\underset{{i_{k_1},\cdots, i_{k_{|e|}}}=0}{\overset{d-1}{\sum }}\omega ^{{i_{k_1}\cdots i_{k_{|e|}}}}_d \hat{\Pi}_{i_{k_1} \cdots i_{k_{|e|}}}  & |e|\geq 2,
\end{array}\end{cases}
\end{equation}
where $\hat{\Pi}_{i_{k_1} \cdots i_{k_{|e|}}}=|i_{k_1}\cdots i_{k_{|e|}}\rangle \langle i_{k_1}\cdots i_{k_{|e|}}|$ and $i_{k_1}$, $\cdots$, $i_{k_{|e|}}$ denote the possible values of the vertices $k_1$, $\cdots$, $k_{|e|}$ (in the computational basis), respectively. The unitary operators $X$, $Z$, and $C_e$ satisfy
\begin{equation}\label{eq:Iden}
\begin{split}
X^k&=\mathbb{I} \iff k=0 \pmod d,\\
Z^k&=\mathbb{I} \iff k=0 \pmod d,\\
C_e^k&=\mathbb{I} \iff k=0 \pmod d.\\
\end{split}
\end{equation}
Denote that $\left|+\right\rangle_d=\sum_{k=0}^{d-1}{{|k\rangle}/{\sqrt{d}}}$,
then the $d$-level hypergraph state corresponding to $H_d$ can be defined as
\begin{equation}\label{eq:def}
|H_d\rangle =\underset{e\in {2^V}}{\prod }C_e^{m_e}{\left|+\right\rangle}^{\otimes N}_d.
\end{equation}
Here the condition ``$e\in {2^V}$'' is equivalent to ``$e\in E$'', because $C_e^0=\mathbb{I}$ $(\forall e\in{2^V})$. For simplicity, in the following we will not express it explicitly.

A qudit hypergraph state is also associated with a stabilizer group through which it can be determined up to a phase factor. For $H_d=\left(V,E\right)$, define
\begin{equation}\label{eq11}
g_k=\l(\prod C_e^{m_e}\r) X_k \l(\prod C_{e'}^{m_{e'}}\r)^{\dagger}=X_k\underset{e:k\in e}{\prod }\left(C_{e\backslash \{k\}}^{\dagger }\right){}^{m_e},
\end{equation}
then
\begin{equation}\label{eq10}
\begin{split}
g_k \left.\left|H_d\right.\right\rangle=\left.\left|H_d\right.\right\rangle,
\end{split}
\end{equation}
and
\begin{equation}
[g_k,g_{k'}]=\l(\prod C_e^{m_e}\r) [X_k,X_{k'}] \l(\prod C_{e'}^{m_{e'}}\r)^{\dagger}=0.
\end{equation}
 
Note that the form of $g_k$ in Eq. \eqref{eq11} is different from that in Eq. \eqref{equation3}. The reason is that when $d=2$, $\forall e$, 
$C_e$ is Hermitian, while for general $d$, this property cannot always hold. 
\black 
The set $\{g_k|k\in V\}$ generates a cyclic Abelian group named the stabilizer group of $\left.\left|H_d\right.\right\rangle$. Generally speaking, like those of qubit hypergraph states~\cite{Guhne14}, the stabilizers of qudit hypergraph states are also non-local operators.

\section{relation between multi-hypergraphs and qudit hypergraph states: correspondence and entanglement property}
\label{sec:3}
In this section, we will discuss the relation between multi-hypergraphs and qudit hypergraph states. Theorem \ref{theorem1} shows that the map from $\{H_d|H_d=(V,E)\}$ to $\{\ket{H_d}|H_d=(V,E)\}$, where $H_d$ is mapped to $\ket{H_d}$, is a bijection. Theorem \ref{theorem2} demonstrates that the connectivity of a multi-hypergraph is closely related to the entanglement property of the corresponding quantum state. To prove these two theorems, we shall prove several lemmas first.

\newtheorem{lemma}{Lemma}
\begin{lemma}\label{lem1}
Divide the hyperedge $e=\{1, 2,\cdots, n\}$ into the control part $e_C=\{1, 2, \cdots, m\}$ and the target part $e_T=\{m+1, m+2, \cdots, n\}$, then
\begin{equation}\label{lem12}
C_e=\underset{i_1,\cdots ,i_{m}=0}{\overset{d-1}{\sum }}|i_1 \cdots i_{m}\rangle \langle i_1 \cdots i_{m}|C_{e_T}^{i_1 \cdots i_m}.
\end{equation}
\end{lemma}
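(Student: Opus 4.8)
The plan is to prove Eq.~\eqref{lem12} by a direct computation that rests on only two facts: the defining formula~\eqref{eq2} for $C_e$, and the fact that the computational-basis projectors $\hat{\Pi}_{i_{m+1}\cdots i_n}$ on the target vertices are mutually orthogonal and resolve the identity. First I would record the elementary identity that, for every nonnegative integer $p$,
\[
C_{e_T}^{\,p}=\sum_{i_{m+1},\cdots,i_n=0}^{d-1}\omega_d^{\,p\,i_{m+1}\cdots i_n}\,\hat{\Pi}_{i_{m+1}\cdots i_n},
\]
which follows from~\eqref{eq2} because raising a spectral sum $\sum_j\lambda_j\hat{\Pi}_j$ with orthogonal $\hat{\Pi}_j$ to the $p$-th power simply replaces each eigenvalue $\lambda_j$ (a root of unity, hence nonzero) by $\lambda_j^{\,p}$. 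With the convention that an empty product of indices equals $1$, this same formula also covers the degenerate target parts $|e_T|=1$ (where $C_{e_T}=Z$ on vertex $n$) and $|e_T|=0$ (where $C_{e_T}=\omega_d$ is a scalar).

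Next I would substitute this expression for $C_{e_T}^{\,i_1\cdots i_m}$ into the right-hand side of~\eqref{lem12}. Since the control and target vertices are disjoint, $|i_1\cdots i_m\rangle\langle i_1\cdots i_m|\,\hat{\Pi}_{i_{m+1}\cdots i_n}=\hat{\Pi}_{i_1\cdots i_n}$, and since $(i_1\cdots i_m)(i_{m+1}\cdots i_n)=i_1\cdots i_n$, the nested sum collapses to $\sum_{i_1,\cdots,i_n=0}^{d-1}\omega_d^{\,i_1\cdots i_n}\hat{\Pi}_{i_1\cdots i_n}$, which is precisely $C_e$ by~\eqref{eq2} in the case $|e|=n\ge 2$.

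Finally I would check the remaining low-cardinality cases separately for completeness, namely $|e|\le 1$: if $e=\emptyset$ both sides equal $\omega_d$, and if $e=\{1\}$ both sides equal $Z$. I also note the borderline split $m=0$, where the projector prefactor on the control part disappears and the right-hand side is just $C_{e_T}^{1}=C_{e_T}=C_e$. I do not anticipate a genuine obstacle in any of this; the only care needed is bookkeeping with the empty-product convention and with these boundary values of $m$ and $|e|$, so that the single displayed identity in the statement holds uniformly.
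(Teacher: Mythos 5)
Your proof is correct and follows essentially the same route as the paper's: both arguments are a direct computation in the computational basis using the fact that powers of $C_{e_T}$ just raise its eigenvalues $\omega_d^{i_{m+1}\cdots i_n}$ to the corresponding power, so that the double sum over control and target indices reassembles into the single sum defining $C_e$. The only difference is cosmetic — you expand the right-hand side to recover $C_e$ while the paper factors $C_e$ into the controlled form — plus your extra (harmless and welcome) bookkeeping of the degenerate cases $|e|\le 1$, $|e_T|\le 1$, and $m=0$.
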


\begin{proof}
From the definition in Eq. \eqref{eq2},
\begin{equation}
\begin{split}
C_e&=\underset{i_{1},\cdots ,i_n=0}{\overset{d-1}{\sum }}\omega _d ^{i_{1} \cdots i_n}\hat{\Pi}_{i_1 \cdots i_n},\\
C_{e_C}&=\underset{i_{1},\cdots ,i_m=0}{\overset{d-1}{\sum }}\omega _d ^{i_{1} \cdots i_m}\hat{\Pi}_{i_1 \cdots i_m},\\
C_{e_T}&=\underset{i_{m+1},\cdots ,i_n=0}{\overset{d-1}{\sum }}\omega _d^{i_{m+1} \cdots i_n}\hat{\Pi}_{i_{m+1} \cdots i_n}.\\
\end{split}
\end{equation}
Because
\begin{equation}
\begin{split}
&\underset{i_{m+1},\cdots ,i_n=0}{\overset{d-1}{\sum }}\omega_d ^{i_1 \cdots i_n} \hat{\Pi}_{i_1 \cdots i_n}\\
=&\underset{i_{m+1},\cdots ,i_n=0}{\overset{d-1}{\sum }}\omega_d ^{i_1 \cdots i_n} \hat{\Pi}_{i_1 \cdots i_m} \hat{\Pi}_{i_{m+1} \cdots i_n}\\
=&\hat{\Pi}_{i_1 \cdots i_m} \underset{i_{m+1},\cdots ,i_n=0}{\overset{d-1}{\sum }}\left(\omega_d^{i_{m+1} \cdots i_{n}}\right)^{i_1 \cdots i_m}\hat{\Pi}_{i_{m+1} \cdots i_n}\\
=&\hat{\Pi}_{i_1 \cdots i_m} C_{e_T}^{i_1 \cdots  i_m},
\end{split}
\end{equation}
\begin{equation}
\begin{split}
C_e&=\underset{i_1, \cdots, i_n=0}{\overset{d-1}{\sum }}\omega_d ^{i_1 \cdots i_n} \hat{\Pi}_{i_1 \cdots i_n}\\
&=\underset{i_1, \cdots, i_m=0}{\overset{d-1}{\sum }}\underset{i_{m+1}, \cdots, i_n=0}{\overset{d-1}{\sum }}\omega_d ^{i_1 \cdots i_n}\hat{\Pi}_{i_1 \cdots i_n}\\
&=\underset{i_1, \cdots, i_m=0}{\overset{d-1}{\sum}}\hat{\Pi}_{i_1 \cdots i_m} C_{e_T}^{i_1 \cdots i_m},
\end{split}
\end{equation}
which is exactly the conclusion in Lemma~\ref{lem1}.
\end{proof}

Lemma~\ref{lem1} demonstrates that a hyperedge operation can be interpreted as a controlled operation: the products of the vertices in $C$ determine the operations imposed on the target part $T$. In fact, one can choose an arbitrary subset of $e$ as the control part, and the remaining part as the target, which originates from the symmetry of $C_e$. \black

\begin{lemma}\label{lemma:2}
Consider a system composed of $A$ and $B$, whose associated Hilbert spaces are $\mathcal{H}_A$ and $\mathcal{H}_B$, respectively. Suppose $\{|1\rangle, |2\rangle, \cdots, |n\rangle\}$ is an orthonormal basis of $\mathcal{H}_A$ and $|\psi _1\rangle$, $|\psi _2\rangle$, $\cdots$, $|\psi _n\rangle$ are normalized vectors in $\mathcal{H}_B$. The vector
\begin{equation}
\left.\left.\left.\left.\left.\left.\left.|1\rangle\left|\psi_1\right.\right\rangle +\right|2\right\rangle |\psi _2\right\rangle +\cdots +\right|n\right\rangle |\psi _n\right\rangle
\end{equation}
is a product state if and only if all the $|\psi_j\rangle$s $(1\leq j\leq n)$ are parallel.
\end{lemma}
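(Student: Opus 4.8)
The plan is to prove the two directions separately. The ``if'' direction is immediate: if the $|\psi_j\rangle$ are all parallel, fix $|\psi_1\rangle$ as a reference and write $|\psi_j\rangle=c_j|\psi_1\rangle$ with $c_1=1$ (normalization forces $|c_j|=1$, but this is not needed); then the vector equals $\bigl(\sum_{j=1}^n c_j|j\rangle\bigr)\otimes|\psi_1\rangle$, which is manifestly a product state.

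The substantive direction is ``only if''. The key fact I would use is the uniqueness of the expansion of a bipartite vector along an orthonormal basis of one tensor factor: for any $|\Psi\rangle\in\mathcal{H}_A\otimes\mathcal{H}_B$ the vectors $|\chi_j\rangle:=(\langle j|\otimes\mathbb{I})|\Psi\rangle\in\mathcal{H}_B$ are the only coefficients satisfying $|\Psi\rangle=\sum_{j=1}^n|j\rangle\otimes|\chi_j\rangle$, because the subspaces $|j\rangle\otimes\mathcal{H}_B$ are mutually orthogonal and together span $\mathcal{H}_A\otimes\mathcal{H}_B$ (here $\{|j\rangle\}_{j=1}^n$ being an orthonormal basis of $\mathcal{H}_A$ is what is used). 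Applied to $|\Psi\rangle=\sum_j|j\rangle\otimes|\psi_j\rangle$ this gives $|\chi_j\rangle=|\psi_j\rangle$. If $|\Psi\rangle$ is also a product state $|\alpha\rangle\otimes|\beta\rangle$, then expanding $|\alpha\rangle=\sum_j a_j|j\rangle$ yields $|\Psi\rangle=\sum_j|j\rangle\otimes(a_j|\beta\rangle)$, and uniqueness forces $|\psi_j\rangle=a_j|\beta\rangle$ for every $j$. Hence every $|\psi_j\rangle$ is a scalar multiple of the single vector $|\beta\rangle$, i.e.\ they are mutually parallel. (Normalization then forces $a_j\neq 0$, $|\beta\rangle\neq 0$ and $|a_j|=1$, but this is not needed for the conclusion.)

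I do not expect a genuine obstacle: the lemma is essentially the uniqueness of a basis expansion. The only point deserving a moment's care is that a product decomposition $|\alpha\rangle\otimes|\beta\rangle$ is itself unique only up to the rescaling $|\alpha\rangle\mapsto\lambda|\alpha\rangle$, $|\beta\rangle\mapsto\lambda^{-1}|\beta\rangle$, but the statement ``all $|\psi_j\rangle$ parallel'' is invariant under that ambiguity, so nothing goes wrong. I would record the lemma in this basis-free form because in the proof of Theorem~\ref{theorem2} it is applied with $A$ a subset of the vertices, $\{|j\rangle\}$ the computational basis on that subset, and the $|\psi_j\rangle$ the (suitably renormalized) branches of $|H_d\rangle$ obtained by grouping that subset against the rest.
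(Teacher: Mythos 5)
Your proof is correct, but it takes a different route from the paper's. For the ``only if'' direction the paper argues physically: it applies the von Neumann measurement $\{|j\rangle\langle j|\}$ to $A$ and notes that, since the total state is a product, subsystem $B$ cannot be disturbed by any local measurement on $A$, so the post-measurement states $|\psi_j\rangle$ of $B$ must all be the same physical state, i.e.\ the same ray. You instead use the purely linear-algebraic fact that the expansion $|\Psi\rangle=\sum_j|j\rangle\otimes|\chi_j\rangle$ along an orthonormal basis of $\mathcal{H}_A$ is unique, so a product decomposition $|\alpha\rangle\otimes|\beta\rangle$ with $|\alpha\rangle=\sum_j a_j|j\rangle$ forces $|\psi_j\rangle=a_j|\beta\rangle$ for every $j$. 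The two arguments isolate the same underlying fact, but yours has the advantage of not appealing to the collapse postulate or to the informal notion of ``physically equivalent'' states --- it is a self-contained piece of linear algebra, and it automatically handles the normalization bookkeeping (all $a_j\neq 0$) that the paper leaves implicit. The paper's version is shorter and more intuitive for a physics readership. Either is acceptable; your closing remark about how the lemma is invoked in Theorem~\ref{theorem2} matches the paper's actual usage.
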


\begin{proof}
 (i)``if''. If all the $|\psi _j\rangle$s are parallel, then each $|\psi _j\rangle$ has a form $\left.e^{i \phi_j}|\psi _0\right\rangle$. So
\begin{equation}
|1\rangle|\psi _1\rangle +|2\rangle |\psi_2\rangle +\cdots +|n\rangle |\psi _n\rangle =\Big(\sum _{j=1}^n e^{i \phi_j}|j\rangle\Big) |\psi _0\rangle,
\end{equation}
which is a product state.

(ii)``only if". Suppose the total system is in a product state, $B$ remains the same physical state no matter what measurement is made to $A$ and whatever the result is. By implementing the von Neumann measurement $\{M_j=|j \rangle \langle j||j\in\{1, 2, \cdots, n\}\}$ to $A$, the part $B$ will collapse to one of the states in $\l\{|\psi_1\rangle, |\psi_2\rangle, \cdots, |\psi_n\rangle\r\}$. So all the $|\psi_j\rangle$s are physically equivalent, i.e., they are parallel.
\end{proof}

\begin{lemma}\label{lemma:ordering}
Qudit hypergraph state $|H_d\rangle$ equals $|+\rangle ^{\otimes N}_d$ if and only if $E=\emptyset$.
\end{lemma}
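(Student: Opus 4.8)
My plan is to compute the computational-basis amplitudes of $|H_d\rangle$ explicitly and compare them with those of $|+\rangle_d^{\otimes N}$. The ``if'' direction is one line: if $E=\emptyset$ then every multiplicity $m_e$ equals $0$, so the product $\prod_e C_e^{m_e}$ appearing in Eq.~\eqref{eq:def} is the identity and $|H_d\rangle=|+\rangle_d^{\otimes N}$.

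For the converse, note that each $C_e$ is diagonal in the computational basis, $C_e=\sum_{\vec{i}}\omega_d^{\,\prod_{k\in e}i_k}\,\ketbra{\vec{i}}{\vec{i}}$ (the $|e|=0$ and $|e|=1$ cases of Eq.~\eqref{eq2} are the empty product, which is $1$, and the single-factor product, respectively), so $\prod_e C_e^{m_e}$ is diagonal as well and, with $\vec{i}=(i_1,\dots,i_N)$ ranging over $\{0,1,\dots,d-1\}^N$,
\begin{equation}
|H_d\rangle=\frac{1}{\sqrt{d^{N}}}\sum_{\vec{i}}\omega_d^{\,f(\vec{i})}\,\ket{\vec{i}},\qquad f(\vec{i})=\sum_{e\in 2^V}m_e\prod_{k\in e}i_k .
\end{equation}
Since $|+\rangle_d^{\otimes N}=d^{-N/2}\sum_{\vec{i}}\ket{\vec{i}}$, the equality $|H_d\rangle=|+\rangle_d^{\otimes N}$ forces $\omega_d^{\,f(\vec{i})}=1$, i.e.\ $f(\vec{i})\equiv 0\pmod d$, for every $\vec{i}$.

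It then remains to show that $f\equiv 0$ on $\{0,1,\dots,d-1\}^N$ implies $m_e=0$ for all $e$. The key step is to restrict to binary inputs: for $S\subseteq V$ let $\vec{i}^{\,S}$ have $i_k=1$ for $k\in S$ and $i_k=0$ otherwise, so that $\prod_{k\in e}i^{\,S}_k$ equals $1$ when $e\subseteq S$ and $0$ otherwise, whence $f(\vec{i}^{\,S})=\sum_{e\subseteq S}m_e\equiv 0\pmod d$ for every $S\subseteq V$. I would finish by induction on $|S|$ (equivalently, Möbius inversion over the Boolean lattice $2^V$): $m_\emptyset\equiv 0\pmod d$ with $0\le m_\emptyset<d$ gives $m_\emptyset=0$, and if $m_e=0$ for all $e$ with $|e|<|S|$, then the relation for $S$ reduces to $m_S\equiv 0\pmod d$, hence $m_S=0$. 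Therefore all multiplicities vanish and $E=\emptyset$.

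The only slightly delicate point is this final reduction: one uses that the inclusion matrix with $(S,e)$-entry equal to $1$ if $e\subseteq S$ and $0$ otherwise is invertible over $\mathbb{Z}$ (it is unitriangular once rows and columns are sorted by cardinality), so that the system $\sum_{e\subseteq S}m_e\equiv 0\pmod d$ (for all $S\subseteq V$) has only the trivial solution in the range $0\le m_e<d$; everything else is routine bookkeeping. It is also worth stressing that ``equals'' in the statement means equality of vectors rather than equality up to a phase — otherwise $E=\{\emptyset\}$, with all other $m_e=0$, would be a spurious counterexample via the global factor $\omega_d^{m_\emptyset}$.
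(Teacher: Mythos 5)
Your proof is correct, but it takes a genuinely different route from the paper's. The paper works inside the stabilizer formalism: it observes that each generator $g_k=X_k\prod_{e:k\in e}(C_{e\backslash\{k\}}^{\dagger})^{m_e}$ must stabilize $|+\rangle_d^{\otimes N}$, forces the diagonal factor to be the identity, and thereby obtains the reduced relation $\prod_{e:k\in e}C_{e\backslash\{k\}}^{m_e}|+\rangle_d^{\otimes(N-1)}=|+\rangle_d^{\otimes(N-1)}$; iterating this vertex-peeling yields a \emph{downward} induction on cardinality, killing first the hyperedges with $|e|=N,N-1$, then $|e|=N-2$, and so on. You instead expand $|H_d\rangle$ in the computational basis, reduce the lemma to the vanishing of the multilinear phase function $f(\vec{i})=\sum_e m_e\prod_{k\in e}i_k$ modulo $d$, evaluate on $\{0,1\}$-indicator vectors, and invert the resulting system via the unitriangular inclusion (zeta) matrix of $2^V$ — an \emph{upward} induction on $|S|$. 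Your argument is more elementary and self-contained: it avoids the slightly delicate step in the paper of passing from equality of states to equality of stabilizer generators, and it makes transparent exactly what is being used, namely that the monomials $\prod_{k\in e}i_k$ are linearly independent as $\mathbb{Z}_d$-valued functions and that this is already detected on binary inputs. What the paper's route buys is consistency with the stabilizer machinery of Eq.~\eqref{eq11}, which is reused elsewhere (e.g., in Theorem~\ref{theorem1} and the discussion of stabilizer states). Your closing remarks — that $\omega_d^{f(\vec{i})}=1$ must be read as $f(\vec{i})\equiv 0\pmod d$, and that ``equals'' means vector equality so that the global phase $C_{\emptyset}^{m_{\emptyset}}=\omega_d^{m_{\emptyset}}$ is genuinely excluded by the $S=\emptyset$ case — are both correct and worth keeping.
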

\begin{proof}
(i)``if''. If $E=\emptyset$, by definition for all $e\in{2^V}$, $m_e=0$, so $|H_d\rangle=|+\rangle ^{\otimes N}_d$.

(ii)``only if". The stabilizer group of $\left.\left|H_d\right.\right\rangle$ is generated by $\left\{X_k\underset{e:k\in e}{\prod }\left(C_{e\backslash \{k\}}^{\dagger }\right){}^{m_e} \middle| k\in V\right\}$ while that of $\ket{+}^{\otimes N}_d$ is generated by $\left\{X_k |k\in V\right\}$.

If
\begin{equation}\label{lem30}
\underset{e}{\prod }C_e^{m_e}{\left|+\right\rangle}^{\otimes N}_d=|+\rangle ^{\otimes N}_d,
\end{equation}
the two qudit hypergraph states will have the same stabilizer group, leading to
\begin{equation}\label{lem3}
X_k \underset{e:k\in e}{\prod }\left(C_{e\backslash \{k\}}^{\dagger }\right){}^{m_e}=X_k \underset{j \neq k}{\prod } X^{p_j}_j,
\end{equation}
where $k \in V$ and $p_j \in \{0,1,\cdots,d-1\}$. The factor $\underset{e:k\in e}{\prod }\left(C_{e\backslash \{k\}}^{\dagger }\right){}^{m_e}$
is always diagonal in the computational basis, while $\underset{j \neq k}{\prod } X^{p_j}_j$ is diagonal only if $p_j=0$ ($\forall j \neq k$). That is to say, to make Eq. \eqref{lem3} hold,
\begin{equation}
\underset{e:k\in e}{\prod }\left(C_{e\backslash \{k\}}^{\dagger }\right){}^{m_e}=\underset{j \neq k}{\prod } X^0_j=\mathbb{I},
\end{equation}
thus (notice that $C_{e\backslash \{k\}}$ is unitary)
\begin{equation}
\underset{e:k\in e}{\prod }C_{e\backslash \{k\}}^{m_e}|+\rangle{}^{\otimes N-1}_d=|+\rangle {}^{\otimes N-1}_d.
\end{equation}
Implement the above procedure several times, generally, one arrives
\begin{equation}\label{lem32}
\underset{e:k_1,\cdots,k_n\in e}{\prod }C_{e\backslash \{k_1,\cdots,k_n\}}^{m_e}|+\rangle {}^{\otimes N-n}_d=|+\rangle {}^{\otimes N-n}_d.
\end{equation}

When $n=N-1$, Eq. \eqref{lem32} becomes
\begin{equation}
C_{\emptyset}^{m_{\{k_1,k_2,\cdots,k_{N-1}\}}}C_{\{k_N\}}^{m_{\{k_1,k_2,\cdots,k_{N}\}}}|+\rangle_d=|+\rangle_d,
\end{equation}
indicating that
\begin{equation}
m_{\{k_1,k_2,\cdots,k_{N-1}\}}=m_{\{k_1,k_2,\cdots,k_{N}\}}=0.
\end{equation}
Because all the $k_i$ $(i \in \{1,2,\cdots,N\})$ are arbitrarily arranged in order, for all the $e$ that satisfy $|e|=N$ or $N-1$, $m_{e}=0$.

When $n=N-2$, Eq. \eqref{lem32} becomes
\begin{equation}\label{lem33}
\underset{e:k_1,\cdots,k_{N-2}\in e}{\prod }C_{e\backslash \{k_1,\cdots,k_{N-2}\}}^{m_e}|+\rangle {}^{\otimes 2}_d=|+\rangle {}^{\otimes 2}_d.
\end{equation}
The product involves all the hyperedges containing $\{k_1,\cdots,k_{N-2}\}$, i.e., the cardinalities of these hyperedges are larger or equal to $N-2$. As is shown in the previous paragraph, hyperedges whose cardinalities are larger than $N-2$ must have 0 multiplicity, thus contributing to identity factors. So Eq. \eqref{lem33} can be reduced to
\begin{equation}
C_{\emptyset}^{m_{\{k_1,k_2,\cdots,k_{N-2}\}}}|+\rangle{}^{\otimes 2}_d=|+\rangle{}^{\otimes 2}_d,
\end{equation}
indicating that ${m_{\{k_1,k_2,\cdots,k_{N-2}\}}}=0$. Generally, if $|e|=N-2$, $m_e=0$.

Similarly, for all the $e$ that satisfy $|e|=N-3,N-2,\cdots,0$, $m_e=0$.

So if $|H_d\rangle=|+\rangle ^{\otimes N}_d$, $m_e=0$ $(\forall e\in 2^V)$, i.e., $E=\emptyset$.
\end{proof}

With these lemmas, we can prove the following theorems.
\newtheorem{theorem}{Theorem}
\begin{theorem}
\label{theorem1}
Suppose $H_{d}'=\left(V, E'\right)$ and $H_{d}=\left(V, E\right)$, then $\left|H_d'\right\rangle = \left|H_{d}\right\rangle$ if and only if $E'=E$.
\end{theorem}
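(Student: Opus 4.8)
The plan is to prove the nontrivial direction (``only if'') by reduction to Lemma~\ref{lemma:ordering}. The ``if'' direction is immediate: if $E'=E$ then the multiplicity functions coincide, so $\prod_e C_e^{m_e'} = \prod_e C_e^{m_e}$ as operators and hence $|H_d'\rangle = |H_d\rangle$. For the converse, suppose $|H_d'\rangle = |H_d\rangle$. Since every $C_e$ is unitary and diagonal in the computational basis, all the $C_e$ (and their powers and adjoints) commute, and the operator $U = \prod_e C_e^{m_e}\big(\prod_e C_e^{m_e'}\big)^\dagger = \prod_e C_e^{\,m_e - m_e'}$ is well defined (the exponents may be taken in $\mathbb{Z}$, or reduced mod $d$ using Eq.~\eqref{eq:Iden}). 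Applying $U$ to $|+\rangle_d^{\otimes N}$ and using $|H_d'\rangle=|H_d\rangle$ gives $U|+\rangle_d^{\otimes N} = |+\rangle_d^{\otimes N}$.

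The key idea is then to interpret $U$ itself as the hyperedge-product operator of a suitable auxiliary multi-hypergraph. Define, for each $e \in 2^V$, the integer $\delta_e \equiv (m_e - m_e') \bmod d \in \{0,1,\dots,d-1\}$, and let $\tilde H_d = (V, \tilde E)$ be the multi-hypergraph with multiplicity function $e \mapsto \delta_e$. Because $C_e^d = \mathbb{I}$ for every $e$ (Eq.~\eqref{eq:Iden}), we have $U = \prod_e C_e^{\delta_e}$, so the identity $U|+\rangle_d^{\otimes N}=|+\rangle_d^{\otimes N}$ says precisely $|\tilde H_d\rangle = |+\rangle_d^{\otimes N}$. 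Lemma~\ref{lemma:ordering} now forces $\tilde E = \emptyset$, i.e.\ $\delta_e = 0$ for all $e$, i.e.\ $m_e \equiv m_e' \pmod d$ for all $e$. Since both $m_e$ and $m_e'$ lie in $\{0,1,\dots,d-1\}$, congruence mod $d$ implies equality, hence $m_e = m_e'$ for every $e \in 2^V$, which is exactly the statement $E' = E$.

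I expect the main (and essentially only) obstacle to be the bookkeeping that lets one write $\prod_e C_e^{m_e}\big(\prod_e C_e^{m_e'}\big)^\dagger$ as a single product $\prod_e C_e^{\delta_e}$ with a legitimate multiplicity function. This rests on two facts already available: (a) the $C_e$ all commute (they are simultaneously diagonal), so rearranging and combining exponents is valid and the order of operations is irrelevant, exactly as noted after Eq.~\eqref{eqn:H} and Eq.~\eqref{eq:def}; and (b) $C_e$ has order dividing $d$ by Eq.~\eqref{eq:Iden}, so negative or large exponents can be reduced into the range $\{0,\dots,d-1\}$ without changing the operator. One should also remark that $C_\emptyset = \omega_d \mathbb{I}$ commutes with everything and contributes only a global phase, but since Lemma~\ref{lemma:ordering} already handles the empty-hyperedge multiplicity explicitly, no separate argument is needed here. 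Modulo this routine reduction, the theorem is a one-line corollary of Lemma~\ref{lemma:ordering}.
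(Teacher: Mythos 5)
Your proposal is correct and follows essentially the same route as the paper: both reduce the ``only if'' direction to Lemma~\ref{lemma:ordering} by forming the product $\prod_e C_e^{m_e'-m_e}$ acting on $\left|+\right\rangle_d^{\otimes N}$. Your explicit reduction of the exponents mod $d$ and the final observation that congruence mod $d$ plus $m_e,m_e'\in\{0,\dots,d-1\}$ forces equality is a small point the paper leaves implicit, but the argument is the same.
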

\begin{proof}
(i)``if''. By definition, in terms of representing $d$-level hypergraph states, a multi-hypergraph corresponds to a unique $d$-level hypergraph state.

(ii)``only if''. For $e \in {2^V}$, denote its multiplicity corresponding to $H_{d}'$ as $m_e'$, then $|H'_d\rangle=\underset{e}{\prod }C_e^{m'_e}{\left|+\right\rangle}^{\otimes N}_d$.
If $\left|H_d'\right\rangle = \left|H_{d}\right\rangle$,
\begin{equation}
{\left|+\right\rangle}^{\otimes N}_d=\underset{e}{\prod }C_e^{m'_e-m_e}{\left|+\right\rangle}^{\otimes N}_d.
\end{equation}
According to Lemma~\ref{lemma:ordering}, this equation holds if and only if for all $e$, $m'_e-m_e=0$, i.e., $E'=E$.
\end{proof}

Theorem \ref{theorem1} indicates that distinct multi-hypergraphs correspond to distinct quantum states, assuming that the systems are both $N$-qudit systems.

An important entanglement property of qudit hypergraph states is revealed in the following theorem.

\begin{theorem}
\label{theorem2}
If one part of a multi-hypergraph is connected with the other part, then these two corresponding subsystems are entangled.
\end{theorem}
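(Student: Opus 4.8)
The plan is to exploit the fact that every hyperedge operator $C_e$ in Eq.~\eqref{eq2} is diagonal in the computational basis, so that $|H_d\rangle$ is the equally weighted state
\begin{equation}
|H_d\rangle=\frac{1}{\sqrt{d^{N}}}\sum_{i_1,\dots,i_N=0}^{d-1}\omega_d^{f(\vec{i})}\,|i_1\cdots i_N\rangle,\qquad f(\vec{i})=\sum_{e\in E}m_e\prod_{k\in e}i_k,
\end{equation}
with the convention that an empty product equals $1$ (so $e=\emptyset$ contributes only the global phase $\omega_d^{m_\emptyset}$). Fix the bipartition $V=A\cup B$, $A\cap B=\emptyset$, and read ``one part connected with the other'' as: there is a hyperedge $e^\star\in E$ with $m_{e^\star}\ge 1$ meeting both $A$ and $B$ (necessarily $|e^\star|\ge 2$). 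Writing $\vec{i}=(\vec{a},\vec{b})$ for the values on $A$ and on $B$, the above expansion reads $\sqrt{d^{|A|}}\,|H_d\rangle=\sum_{\vec{a}}|\vec{a}\rangle_A\otimes|\phi_{\vec{a}}\rangle_B$ with $|\phi_{\vec{a}}\rangle_B=d^{-|B|/2}\sum_{\vec{b}}\omega_d^{f(\vec{a},\vec{b})}|\vec{b}\rangle_B$ normalized. By Lemma~\ref{lemma:2}, $|H_d\rangle$ is a product state across the $A|B$ cut if and only if all the $|\phi_{\vec{a}}\rangle_B$ are parallel, equivalently if and only if $g_{\vec{a},\vec{a}'}(\vec{b}):=f(\vec{a},\vec{b})-f(\vec{a}',\vec{b})$ is independent of $\vec{b}$ modulo $d$ for all pairs $\vec{a},\vec{a}'$. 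Hence it suffices to exhibit one pair for which $g_{\vec{a},\vec{a}'}$ is non-constant in $\vec{b}$.

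For that pair I would use a minimality choice: take $e^\star$ with $|e^\star\cap A|$ minimal among all crossing hyperedges of positive multiplicity, put $a_k=1$ for $k\in e^\star\cap A$ and $a_k=0$ otherwise, and $a'_k=0$ for all $k\in A$. Splitting $e$ as $e_A=e\cap A$, $e_B=e\cap B$: hyperedges inside $B$ cancel in $g_{\vec{a},\vec{a}'}$; hyperedges inside $A$ (and $e=\emptyset$) contribute a $\vec{b}$-independent constant; a crossing $e$ contributes $m_e\big(\prod_{k\in e_A}a_k-\prod_{k\in e_A}a'_k\big)\prod_{k\in e_B}b_k$, where $\prod_{k\in e_A}a'_k=0$ because $e_A\ne\emptyset$, and $\prod_{k\in e_A}a_k=1$ exactly when $e_A\subseteq e^\star\cap A$, which by minimality forces $e_A=e^\star\cap A$. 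Therefore
\begin{equation}
g_{\vec{a},\vec{a}'}(\vec{b})\equiv c_0+\sum_{j=1}^{r}m_{e_j}\prod_{k\in e_j\cap B}b_k\pmod d ,
\end{equation}
where $e_1,\dots,e_r$ are exactly the crossing hyperedges of positive multiplicity whose $A$-part is $e^\star\cap A$ (a nonempty list, containing $e^\star$), and the sets $e_j\cap B$ are distinct and nonempty. To show this is non-constant, pick $j_0$ with $|e_{j_0}\cap B|$ minimal and evaluate at $b_k=1$ for $k\in e_{j_0}\cap B$ and $b_k=0$ otherwise: every product vanishes except the ones with $e_j\cap B\subseteq e_{j_0}\cap B$, i.e.\ only $j=j_0$, giving $c_0+m_{e_{j_0}}$; at $\vec{b}=\vec 0$ the value is $c_0$. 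Since the multiplicity $m_{e_{j_0}}\in\{1,\dots,d-1\}$ is nonzero modulo $d$, $g_{\vec{a},\vec{a}'}$ is not constant, so $|\phi_{\vec{a}}\rangle_B$ and $|\phi_{\vec{a}'}\rangle_B$ are not parallel, and Lemma~\ref{lemma:2} gives that $|H_d\rangle$ is entangled across $A|B$.

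The step I expect to be the main obstacle is the \emph{collision} of several crossing hyperedges sharing the same $A$-part: a single crossing hyperedge need not survive in $g_{\vec{a},\vec{a}'}$, so one cannot simply isolate $e^\star$. The two nested minimality choices (first minimizing $|e\cap A|$, then $|e\cap B|$ among the survivors) together with the explicit $0/1$ evaluations are what make the argument go through for every $d\ge 2$, composite $d$ included --- over $\mathbb{Z}_d$ with composite $d$ a polynomial-nonvanishing argument would not be available. A secondary, purely bookkeeping point is to check carefully that within-$A$ hyperedges, within-$B$ hyperedges, and the empty hyperedge all contribute only $\vec{b}$-independent terms and are thus absorbed into $c_0$.
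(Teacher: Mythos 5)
Your argument is correct, and it takes a genuinely different route from the paper's. The paper works at the operator level: it splits $E$ into $E_C$, $E_T$ and the crossing multiset $\Lambda$, uses Lemma~\ref{lem1} to rewrite the crossing operators as controlled operations on the target part, and then tests the parallelism criterion of Lemma~\ref{lemma:2} at the \emph{single} control string $(1,\dots,1)$, where $\hat f(1,\dots,1)=\prod_{\epsilon\in\Lambda}C_{t_\epsilon}^{m_\epsilon}$, reducing the desired contradiction to Lemma~\ref{lemma:ordering}. You instead work directly with the phase polynomial $f(\vec i)=\sum_e m_e\prod_{k\in e}i_k$ and exhibit two explicitly chosen, non-parallel conditional states via a double minimality argument (first minimal $|e\cap A|$ among crossing hyperedges of positive multiplicity, then minimal $|e\cap B|$ among the survivors) combined with $0/1$ evaluations. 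What your route buys is precisely the ``collision'' issue you flag: if several crossing hyperedges share the same target part and their multiplicities sum to $0\pmod d$ --- e.g.\ $d=2$, $E=\{\{1,3\},\{2,3\}\}$ with the cut $\{1,2\}\,|\,\{3\}$, where $\prod_{\epsilon\in\Lambda}C_{t_\epsilon}^{m_\epsilon}=Z_3^2=\mathbb{I}$ --- then the paper's final equation is satisfied with $z=0$ and no contradiction with Lemma~\ref{lemma:ordering} arises at the all-ones string, even though the state is entangled across that cut; the parallelism condition must be tested at a better-chosen control string, and your indicator vector of $e^\star\cap A$ (with $e^\star$ of minimal $A$-part) supplies exactly such a choice, so your proof is in fact tighter than the paper's at this step. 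Your remark that the explicit evaluations are what make the argument survive composite $d$, where no polynomial-nonvanishing argument over $\mathbb{Z}_d$ is available, is also well taken; the bookkeeping that within-$A$, within-$B$ and empty hyperedges contribute only $\vec b$-independent terms is carried out correctly.
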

\begin{proof}
Suppose $H_d=(V,E)$, divide $V$ into two parts, one is called the control part ($C=\{c_1,c_2,\cdots,c_{|C|}\}$) and the other is called the target ($T=\{t_1,t_2,\cdots,t_{|T|}\}$), satisfying $C\cup T=V$ and $C\cap T=\emptyset$. \black Accordingly, we can define 3 sub-multisets of $E$, i.e., $E_C$, $E_T$ and $\Lambda$. $E_C$ ($E_T$) constitutes of all the elements in $E$ that are subsets of $C$ ($T$); $\Lambda$ consists of all the elements in $E$ that contains vertices in $C$ and $T$ simultaneously. If $\Lambda \neq \emptyset$, $C$ and $T$ are connected through hyperedges in $\Lambda$. \black

Define the multi-hypergraphs $H_d^C=(C,E_C)$ and $H_d^T=(T,E_T)$, then
\begin{equation}\label{eq:theorem1}
|H_d\rangle =C_{\emptyset}^{-m_\emptyset} \underset{\epsilon \in \Lambda}{\prod} C_{\epsilon}^{m_{\epsilon}} |H_d^{C}\rangle |H_d^{T}\rangle,
\end{equation}
where $|H_d^{C}\rangle=\prod_{e'\in E_C} C_{e'}^{m_{e'}}{\left|+\right\rangle}^{\otimes |C|}_d$ and $|H_d^{T}\rangle=\prod_{e''\in E_T} C_{e''}^{m_{e''}}{\left|+\right\rangle}^{\otimes |T|}_d$ (notice that the multiplicity of each hyperedge in $E_C$, $E_T$ and $\Lambda$ is the same as the one in $E$).
Expanding $|H_d^{C}\rangle$ in the computational basis explicitly, one has
\begin{equation}
|H_d^{C}\rangle=\frac{1}{{\sqrt{d}}^{|C|}}\underset{i_{c_1},\cdots,i_{c_{|C|}}=0}{\overset{d-1}{\sum}}e^{i \phi(i_{c_1},\cdots ,i_{c_{|C|}})}|i_{c_1}  \cdots i_{c_{|C|}}\rangle.
\end{equation}
According to Lemma~\ref{lem1}, all the $C_e$s in Eq. \eqref{eq:theorem1} can be expressed in a form like Eq. \eqref{lem12}, so
\begin{equation}\label{eqtheorem}
|H_d\rangle=\frac{C_{\emptyset}^{m_\emptyset}}{{\sqrt{d}}^{|C|}}\underset{i_{c_1},\cdots,i_{c_{|C|}}=0}{\overset{d-1}{\sum}}|i_{c_1}\cdots i_{c_{|C|}}\rangle' \hat{f}(i_{c_1},\cdots ,i_{c_{|C|}})|H_d^T\rangle,
\end{equation}
where $|i_{c_1}\cdots i_{c_{|C|}}\rangle'=e^{i \phi(i_{c_1},\cdots ,i_{c_{|C|}})}|i_{c_1}  \cdots i_{c_{|C|}}\rangle$ and $\hat{f}(i_{c_1},\cdots ,i_{c_{|C|}})$ is some composite hyperedge transformation.

If $|H_d\rangle$ is a product state, all $\hat{f}(i_{c_1},i_{c_2},\cdots,i_{c_{|C|}})|H_d^T\rangle$ ($\forall i_{c_1},\cdots,i_{c_{|C|}}\in\{0,1,\cdots,d-1\}$) must be parallel (Lemma~\ref{lemma:2}), i.e.,
\begin{equation}
\begin{split}
\hat{f}(i_{c_1},\cdots ,i_{c_{|C|}})|H_d^T\rangle&=e^{i \delta(i_{c_1},\cdots ,i_{c_{|C|}})}\hat{f}(0,\cdots ,0)|H_d^T\rangle\\
&=e^{i \delta(i_{c_1},\cdots ,i_{c_{|C|}})}|H_d^T\rangle.
\end{split}
\end{equation}
Divide every $\epsilon$ in $\Lambda$ into $c_{\epsilon}$ and $t_{\epsilon}$, where $c_{\epsilon}=\epsilon \cap C$ and $t_{\epsilon}=\epsilon \cap T$, then (Lemma \ref{lem1}),
\begin{equation}
\hat{f}(1,\cdots,1)|H_d^T\rangle=\underset{\epsilon \in {\Lambda}}{\prod} C_{t_{\epsilon}}^{m_{\epsilon}}|H_d^T\rangle.
\end{equation}
So
\begin{equation}
\underset{\epsilon \in {\Lambda}}{\prod} C_{t_{\epsilon}}^{m_{\epsilon}}|H_d^T\rangle=e^{i \delta(1,\cdots ,1)}|H_d^T\rangle=C_{\emptyset}^z|H_d^T\rangle,
\end{equation}
where $z\in \{0,1,\cdots,d-1\}$, thus
\begin{equation}
C_{\emptyset}^{d-z}\underset{\epsilon \in {\Lambda}}{\prod} C_{t_{\epsilon}}^{m_{\epsilon}}{\left|+\right\rangle}^{\otimes |T|}_d={\left|+\right\rangle}^{\otimes |T|}_d.
\end{equation}
This equation cannot be true because of Lemma \ref{lemma:ordering}. So $|H_d\rangle$ cannot be a product state in the form like $|\psi \rangle _C|\phi \rangle _T$, i.e., the two parts are entangled.
\end{proof}

Theorem \ref{theorem2} offers us an ability to knowing the entanglement structure of a qudit hypergraph state by reading the connectivity property of the multi-hypergraph. With the result in this theorem, we have the following two corollaries.

\newtheorem{corollary}{Corollary}
\begin{corollary}\label{coro1}
If a multi-hypergraph $H_d$ is connected, then $|H_d\rangle$ is genuinely entangled.
\end{corollary}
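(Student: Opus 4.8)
The plan is to reduce the statement to Theorem~\ref{theorem2} applied across every bipartition of the vertex set. Recall that $|H_d\rangle$ is \emph{genuinely entangled} precisely when it is not a product state $|\psi\rangle_C\otimes|\phi\rangle_T$ for \emph{any} splitting of $V$ into two nonempty disjoint parts $C$ and $T$ with $C\cup T=V$. So I would fix such an arbitrary bipartition $(C,T)$ and, following the notation of the proof of Theorem~\ref{theorem2}, form the sub-multisets $E_C$, $E_T$ and $\Lambda$, where $\Lambda$ collects the hyperedges of $E$ that meet both $C$ and $T$.

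The key step is to observe that connectedness of $H_d$ forces $\Lambda\neq\emptyset$ for every such bipartition. Here I would first make the notion of connectivity explicit: $H_d$ is connected iff the ordinary graph on $V$ in which $i\sim j$ whenever some hyperedge of $E$ contains both $i$ and $j$ is connected. If some bipartition $(C,T)$ had $\Lambda=\emptyset$, then every hyperedge of cardinality $\geq 2$ would lie entirely inside $C$ or entirely inside $T$, so no edge of that graph would cross the cut, contradicting connectivity (this uses $N\geq 2$, and one notes that the empty hyperedge and the singleton hyperedges carry no connectivity information and are irrelevant here). Hence $\Lambda\neq\emptyset$.

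Once $\Lambda\neq\emptyset$ is established, Theorem~\ref{theorem2} applies verbatim to $(C,T)$ and shows that the subsystems $C$ and $T$ are entangled, i.e., $|H_d\rangle$ is not of the form $|\psi\rangle_C\otimes|\phi\rangle_T$. Since the bipartition was arbitrary, $|H_d\rangle$ is not biseparable across any cut, which is exactly genuine multipartite entanglement; for $N=1$ the claim is vacuous as there is no bipartition.

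The only real obstacle is bookkeeping rather than mathematics: stating the definition of a connected multi-hypergraph precisely enough that ``$\Lambda\neq\emptyset$ for every bipartition'' becomes immediate, while correctly discarding the degenerate contributions of the empty and singleton hyperedges, which appear in $E$ but say nothing about connectivity. After that, the corollary is an immediate consequence of Theorem~\ref{theorem2}.
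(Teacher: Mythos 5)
Your proposal is correct and follows essentially the same route as the paper: fix an arbitrary bipartition, note that connectedness forces some hyperedge to cross the cut, invoke Theorem~\ref{theorem2}, and conclude non-biseparability across every cut. The paper's own proof is just a terser version of this; your added care about why $\Lambda\neq\emptyset$ and about the degenerate empty/singleton hyperedges is a useful, but not essentially different, elaboration.
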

\begin{proof}
If $H_d$ is connected, divide it into arbitrary two parts, then the two parts are connected through some hyperedges. According to Theorem \ref{theorem2}, these two parts are entangled. As the division is arbitrary, $|H_d\rangle$ is non-biseparable, i.e., it is genuinely entangled.
\end{proof}

\begin{corollary}\label{coro2}
Suppose an unconnected multi-hypergraph $H_d$ is composed of several blocks $(H_d^{(i)})$ that are not connected to each other, and each one is a connected multi-hypergraph or possesses only one vertex, then each $\ket{H_d^{(i)}}$ that possesses more than one vertex is a genuinely entangled state, and different blocks are not entangled with each other.
\end{corollary}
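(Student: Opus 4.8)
The plan is to decompose the proof of Corollary~\ref{coro2} into two independent assertions and to reduce each to results already established. First, for the genuine entanglement of each multi-vertex block $\ket{H_d^{(i)}}$: since $H_d^{(i)}$ is by hypothesis a connected multi-hypergraph, Corollary~\ref{coro1} applies verbatim to it, so $\ket{H_d^{(i)}}$ is genuinely entangled. The only mild subtlety is noticing that $H_d^{(i)}$ is itself a legitimate multi-hypergraph on its own vertex set, with the inherited multiplicity function, so that Corollary~\ref{coro1} may be invoked on it directly; once this is observed, this half is essentially immediate.

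Second, for the claim that distinct blocks are not entangled with each other: I would start from the factorization in Eq.~\eqref{eq:theorem1}. Take the bipartition of $V$ in which $C$ is the vertex set of one block $H_d^{(i)}$ and $T$ is the union of the vertex sets of all the remaining blocks. Because the blocks are pairwise disconnected, the cross-multiset $\Lambda$ is empty, and the only hyperedge shared between "control" and "target" accounting is the empty hyperedge. Hence Eq.~\eqref{eq:theorem1} collapses to $\ket{H_d} = C_\emptyset^{-m_\emptyset}\ket{H_d^{(i)}}\ket{H_d^{T}}$, a genuine tensor product (the global phase $C_\emptyset^{-m_\emptyset}=\omega_d^{-m_\emptyset}$ is harmless). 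Iterating this over all blocks, or equivalently observing that $E$ itself splits as a disjoint union $E = \{\emptyset\}^{m_\emptyset}\,\sqcup\,\bigsqcup_i E^{(i)}$ with each $E^{(i)}$ supported on the vertices of $H_d^{(i)}$, I obtain
\begin{equation}
\ket{H_d} = \omega_d^{-(k-1)m_\emptyset}\bigotimes_i \ket{H_d^{(i)}},
\end{equation}
where $k$ is the number of blocks and the phase merely bookkeeps the empty-hyperedge multiplicity that would otherwise be overcounted. This product structure across blocks shows no two blocks are entangled; combined with the genuine entanglement of each multi-vertex block from the first part, the pieces are as claimed.

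I expect the main obstacle to be purely a matter of careful bookkeeping rather than a genuine mathematical difficulty: one must be precise about how the single empty hyperedge (and its multiplicity $m_\emptyset$) is attributed when the vertex set is partitioned into more than two pieces, so that the phase prefactor in the tensor-product expression is stated correctly and the argument does not accidentally double-count it. A secondary point worth a sentence is to confirm that a single-vertex block $H_d^{(i)}=(\{v\},\emptyset)$ contributes the factor $\ket{+}_d$ (by Lemma~\ref{lemma:ordering}), so it is consistent to exclude such trivial blocks from the genuine-entanglement assertion while still including them in the product decomposition. With these two clarifications the proof is short, and I would keep it to a few lines: invoke Corollary~\ref{coro1} for each nontrivial block, then apply Eq.~\eqref{eq:theorem1} with $\Lambda=\emptyset$ to get the across-block product form.
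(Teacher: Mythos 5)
Your proof is correct and follows essentially the same route as the paper's: invoke Corollary~\ref{coro1} for each connected block, and observe that disconnected blocks make the defining product in Eq.~\eqref{eq:def} factorize across blocks (the paper states this in one line; you simply make the $C_\emptyset$ bookkeeping explicit, which is a harmless global phase). No gaps.
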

\begin{proof}
Different blocks are not connected to each other, so they are not entangled (Eq. \eqref{eq:def}). For connected $H_d^{(i)}$, because $\ket{H_d^{(i)}}$ is also a qudit hypergraph state, it is genuinely entangled (Corollary \ref{coro1}).
\end{proof}

Corollary \ref{coro1} and Corollary \ref{coro2} enable multi-hypergraph a useful tool for visualizing the entanglement of its corresponding qudit hypergraph state.

\section{Relationship among qudit hypergraph states and some well-known state classes}
\label{sec:?}
In this section, we will discuss relationships among qudit hypergraph states and some well-known state classes, i.e., generalized real equally weighted states, qudit graph states, and stabilizer states. 

\subsection{Qudit hypergraph states and generalized real equally weighted states}
\label{}
The \emph{real equally weighted state}s are the quantum states that all the coefficients in the computational basis are real and with equal absolute value. For example, \emph{real equally weighted state}s describing $N$-qubit systems can all be represented in the form 
\begin{equation}
\ket{\psi(f,N)}=\frac{1}{2^{N/2}}\sum_{i_1,\cdots,i_N=0}^1 (-1)^{f(i_1,\cdots,i_N)}\ket{i_1 \cdots i_N}, 
\end{equation}
where $f(i_1,\cdots,i_N)\in \mathbb{Z}_2$. By interpretating $-1$ as $\omega_2$, the generalized real equally weighted states (\emph{GREWS}s) can be expressed as
\begin{equation}\label{eq:generews}
\ket{\psi(f,N)}_d=\frac{1}{d^{N/2}}\sum_{i_1,\cdots,i_N=0}^{d-1} {\omega}_d^{f(i_1,\cdots,i_N)}\ket{i_1 \cdots i_N}, 
\end{equation}
in which $f(i_1,\cdots,i_N)\in \mathbb{Z}_d$.

It has been demonstrated in the literature that qubit hypergraph states are equivalent to \emph{real equally weighted state}s \cite{Qu13, Rossi13}. For the qudit case, it would be interesting to investigate whether a similar relationship exists. From the definition of qudit hypergraph states, we can see that every $N$-qudit hypergraph state can be expressed in the form of Eq. \eqref{eq:generews}, i.e., all qudit hypergraph states are \emph{GREWS}s. \black For specific $N$ and $d$, the total number of \emph{GREWS}s is $d^{d^N}$, while in total there are only $d^{2^N}$ qudit hypergraph states (There are $d^{2^N}$ such multi-hypergraphs in total and Theorem \ref{theorem1} shows that the states and multi-hypergraphs have a one-to-one correspondence).  Only if $d=2$, $d^{d^N}=d^{2^N}$, otherwise, $d^{d^N} > d^{2^N}$. This indicates that if $d>2$, the set of qudit hypergraph states is a proper subset of \emph{GREWS}s. This relationship is different from the qubit case (See Fig. \ref{fig:relation2}). \black


\begin{figure}
\includegraphics[width =8 cm]{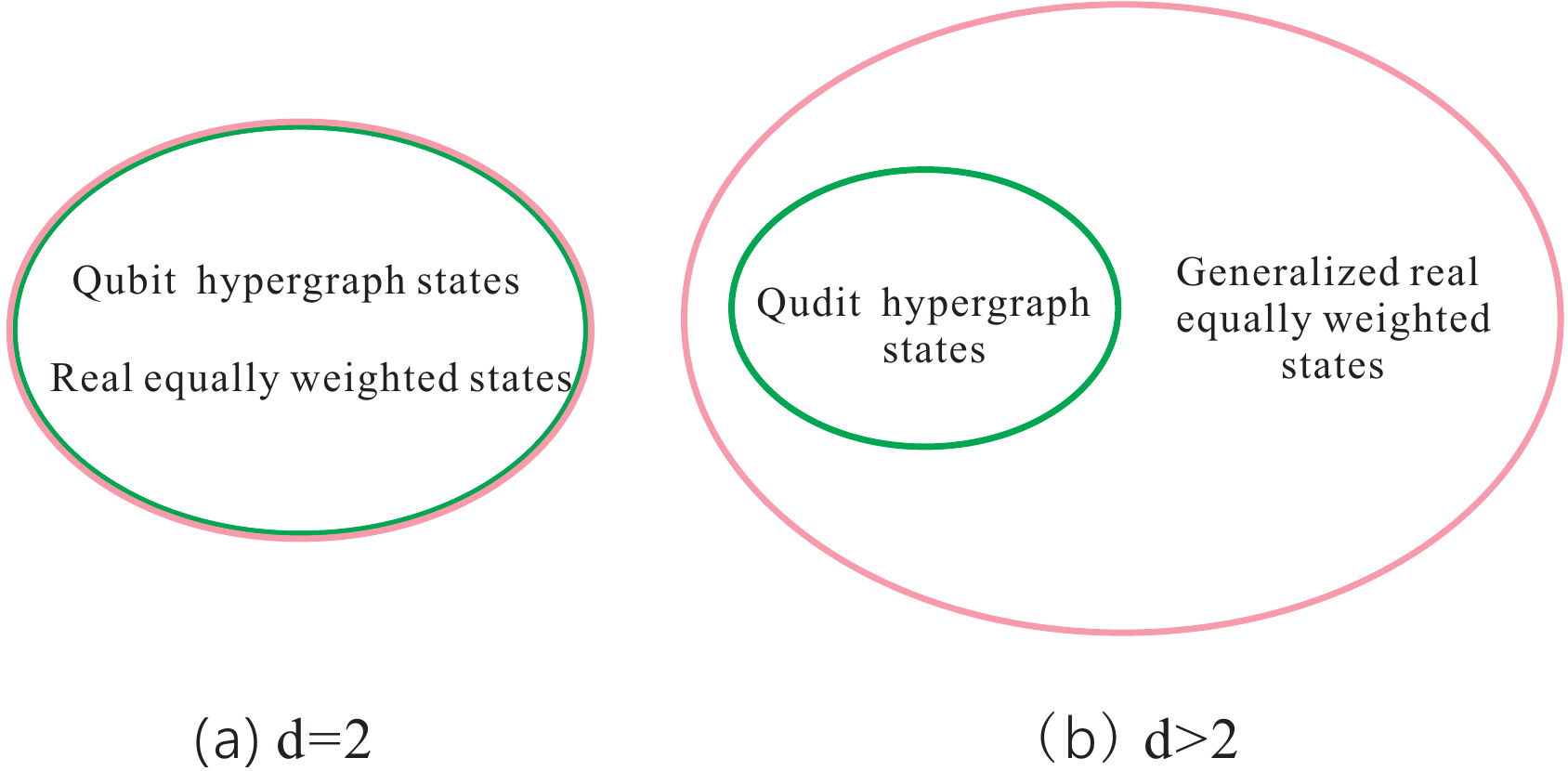}
\caption{Relationship between ``qudit hypergraph states'' and ``generalized real equally weighted states''. (a) When $d=2$, ``qudit hypergraph states'' reduces to qubit hypergraph states, ``generalized real equally weighted states'' reduces to real equally weighted states, and the two sets are equivalent. (b) When $d>2$, qudit hypergraph states form a proper subset of generalized real equally weighted states.}
\label{fig:relation2}
\end{figure}

\subsection{Relationship among qudit hypergraph states, qudit graph states, and stabilizer states}
\label{}
Qudit hypergraph state is a generalization of qudit graph state, so qudit graph states form a subclass of qudit hypergraph states. According to Theorem \ref{theorem1}, two qudit hypergraph states are equal only if their corresponding multi-hypergraphs are the same. Generally speaking, a multi-hypergraph can have hyperedges with cardinalities larger than 2, which is different from that of multigraphs. Therefore, in general, a qudit hypergraph state is not a qudit graph state.

Stabilizer states of $N$-qudit systems are the common eigenstates with eigenvalue 1 of $N$ independent elements in the Pauli group $\mathcal{G}_N^{(d)}$ \cite{ashikhmin2001nonbinary,ketkar2006nonbinary,helwig2013absolutely}, where $\mathcal{G}_N^{(d)}$ is the $N$-fold product of $\mathcal{G}^{(d)}$, and $\mathcal{G}^{(d)}=\l\{\omega_d^a X^b Z^c |a, b, c \in \mathbb{Z}_d\r\}$ ($X$ and $Z$ are the qudit Pauli operators defined by Eq. \eqref{eq:XZ}). 
According to this definition, qudit graph states are all stabilizer states, because there are $N$ independent stabilizers that can be expressed in the form $g_k=X_k \underset{n:\{k,n\}\in E}{\prod }Z_n^{d-m_{\{k,n\}}}$, i.e., $g_k\in \mathcal{G}_N^{(d)}$ ($k\in \{1,2,\cdots,N\} $). As for the relationship between qudit hypergraph states and stabilizer states, we illustrate the result in the following proposition.
\newtheorem{proposition}{Proposition}
\begin{proposition}\label{coro1}
A qudit hypergraph state is a stabilizer state if and only if the cardinalities of the hyperedges are all no more than 2.
\end{proposition}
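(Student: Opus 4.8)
\section*{Proof proposal}

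The plan is to handle the two directions separately, with the ``only if'' direction carrying essentially all the content. For the ``if'' direction I would observe that when every hyperedge has cardinality at most $2$ the state $|H_d\rangle$ differs from a qudit graph state only by a global phase (the empty hyperedge contributes $C_\emptyset = \omega_d$) and by local $Z$ operators (a singleton hyperedge $\{k\}$ contributes $C_{\{k\}} = Z_k$), and neither operation takes a stabilizer state outside the stabilizer class. Concretely, in this case Eq.~\eqref{eq11} reduces to $g_k = \omega_d^{-m_{\{k\}}} X_k \prod_n Z_n^{-m_{\{k,n\}}} \in \mathcal{G}_N^{(d)}$, and these $N$ commuting, independent Pauli operators determine $|H_d\rangle$ up to a phase (as noted above), so $|H_d\rangle$ is a stabilizer state.

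For the ``only if'' direction, suppose $|H_d\rangle$ is a stabilizer state and let $S \subseteq \mathcal{G}_N^{(d)}$ be its Pauli stabilizer group, of order $d^N$. The first key step is to show that the canonical generators $g_k$ of Eq.~\eqref{eq11} must themselves lie in $S$. Here I would exploit that $|H_d\rangle = d^{-N/2}\sum_{\vec i}\omega_d^{f(\vec i)}|\vec i\rangle$ has full support in the computational basis: this forces the kernel of the ``$X$-part'' homomorphism $S \to \mathbb{Z}_d^N$ to be trivial (a diagonal Pauli $\omega_d^a Z^{\vec c}$ fixing a full-support vector must have $\vec c = 0$ and $a=0$), hence by counting the homomorphism is a bijection and there is a unique $h_k \in S$ whose $X$-part is $e_k$. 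Then $g_k^{-1} h_k$ is a diagonal unitary fixing $|H_d\rangle$, hence equal to the identity, so $g_k = h_k \in S \subseteq \mathcal{G}_N^{(d)}$ for every $k$.

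The second step is to extract from $g_k \in \mathcal{G}_N^{(d)}$ a constraint on the hyperedges. Writing $g_k = X_k D_k$ with $D_k = \prod_{e:k\in e}\big(C_{e\backslash\{k\}}^{\dagger}\big)^{m_e}$ diagonal, Lemma~\ref{lem1} (or a direct computation) gives that the phase of $D_k$ on $|\vec i\rangle$ is $-\sum_{e:k\in e} m_e \prod_{j\in e\backslash\{k\}} i_j \pmod d$; membership $g_k \in \mathcal{G}_N^{(d)}$ is then equivalent to this exponent being an affine-linear function of $(i_1,\dots,i_N)$ modulo $d$, i.e.\ to $D_k$ having the form $\omega_d^a Z^{\vec c}$. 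Since the multilinear monomials $\prod_{j\in T} i_j$ for distinct $T\subseteq\{1,\dots,N\}$ are linearly independent as functions $\mathbb{Z}_d^N\to\mathbb{Z}_d$ (by M\"obius inversion over the subset lattice, valid for every $d$), the degree-$\ge 2$ part of the exponent must vanish identically, which means $m_e = 0$ whenever $k\in e$ and $|e|\ge 3$. Letting $k$ range over $V$ shows $H_d$ has no hyperedge of cardinality $\ge 3$, completing the proof.

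The step I expect to be the main obstacle is the first one: ruling out the a priori possibility that $|H_d\rangle$ is a stabilizer state with respect to some Pauli stabilizer group unrelated to the canonical generators $g_k$. The resolution hinges on the full-support property of $|H_d\rangle$, which pins down the ``$X$-sector'' of any Pauli stabilizer group and forces its generators to coincide with the $g_k$; once that reduction is in hand, the remaining combinatorics (linear independence of multilinear monomials) is routine.
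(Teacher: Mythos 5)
Your proof is correct, and in the ``only if'' direction it is actually more complete than the paper's own argument. The paper's proof shows that a hyperedge of cardinality $>2$ forces the canonical generator $g_k$ of Eq.~\eqref{eq11} outside $\mathcal{G}_N^{(d)}$, and it does so by a different route from yours: it notes that $X_k^{-1}g_k\in\mathcal{G}_N^{(d)}$ would make the auxiliary state $\ket{H_d(k)}=\prod_{e:k\in e}C_{e\backslash\{k\}}^{d-m_e}\ket{+}_d^{\otimes N}$ a product state, contradicting Theorem~\ref{theorem2} because $e\backslash\{k\}$ is a connecting hyperedge of cardinality $\geq 2$. You instead expand the exponent of the diagonal part $D_k$ and invoke linear independence of the multilinear monomials over the subset lattice; both are valid, yours being self-contained while the paper's recycles the entanglement theorem. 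What the paper silently assumes --- and what you correctly identify as the main obstacle and then close --- is that a Pauli stabilizer group for $\ket{H_d}$ must actually contain the canonical $g_k$; your full-support argument (trivial kernel of the $X$-part homomorphism, hence a unique Pauli element with $X$-part $e_k$, which must coincide with $g_k$ since their quotient is a diagonal unitary fixing a full-support vector) supplies exactly the missing reduction. The only caveat worth flagging is that your counting step uses $|S|=d^N$, which is clean for prime $d$ but requires a careful definition of ``$N$ independent elements'' for composite $d$; the paper is no more careful on this point, so this does not disadvantage your argument relative to the original.
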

\begin{proof}
The stabilizer group of $\ket{H_d}$ is generated by $\{g_k=X_k\underset{e:k\in e}{\prod }C_{e\backslash \{k\}}^{d-m_e}|k=1,2,\cdots,N\}$. If the cardinalities of the hyperedges are all no more than 2, then $\forall e, k$, $C_{e\backslash \{k\}}$ is $\omega_d$ or a $Z$ operator. Thus in this case, $\ket{H_d}$ must be a stabilizer state. If some hyperedge in $H_d$ has cardinality larger than 2 (suppose the vertex $k$ is included by such a hyperedge), then $g_k\notin \mathcal{G}_N^{(d)}$. The reason is as follows. If $g_k \in {\mathcal{G}_N}$, then $X_k^{-1} g_k \in {\mathcal{G}_N}$. Define a new qudit hypergraph state that $\ket{H_d(k)}=\underset{e:k\in e}{\prod }C_{e\backslash \{k\}}^{d-m_e} \ket{+}_d^{\otimes N}$, then it must be a product state. If a hyperedge $e$ satisfies $|e|>2$, $H_d(k)$ possesses a hyperedge $e\backslash \{k\}$ satisfying $|e\backslash \{k\}|\geq2$, which means that some vertices in $H_d(k)$ are connected by $e\backslash \{k\}$. According to Theorem \ref{theorem2}, such a qudit hypergraph state cannot be a product state, which is contrary to that $\ket{H_d(k)}$ is a product state. \black So only if the cardinalities of all the hyperedges are no more than 2 can $\ket{H_d}$ be a stabilizer state. \black
\end{proof}

According to Proposition \ref{coro1}, a qudit hypergraph state that is also a stabilizer state at the same time may not be a qudit graph state (See Fig.~\ref{fig:relation1}). It may also be a qudit graph state operated by some generalized local Pauli operations.

To summarize, the relationship among qudit hypergraph states, qudit graph states and stabilizer states can be expressed in Fig. \ref{fig:relation1}, which is very similar to the qubit case studied in Ref. \cite{Qu13}. 

\begin{figure}
\includegraphics[width =7 cm]{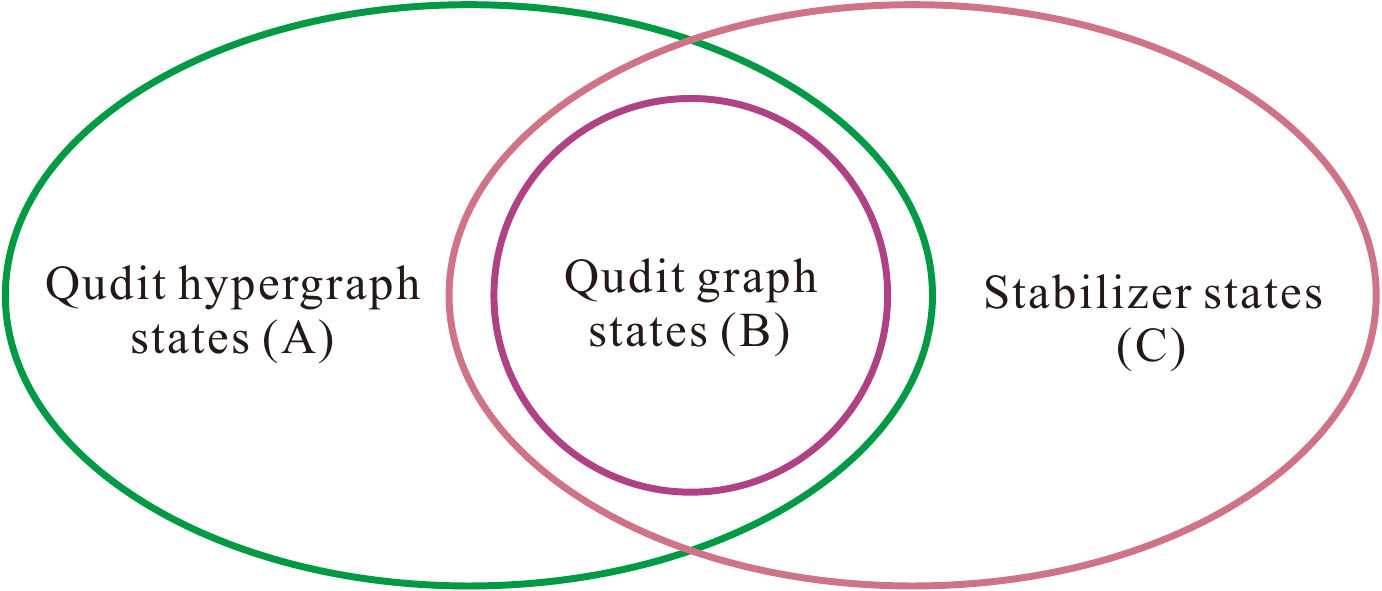}
\caption{Relationship among qudit hypergraph states (A), qudit graph states (B), and stabilizer states (C). B is a proper subset of $A \cap C$, because there are qudit hypergraph states that are qudit graph states acted upon by single-vertex hyperedge operations and 0-vertex hyperedge operations, i.e., they are stabilizer states but not qudit graph states.}
\label{fig:relation1}
\end{figure}

\black

\section{QUDIT HYPERGRAPH STATES' RESPONSES TO local OPERATIONS and measurements}
\label{sec:4}
In this section, we will consider how qudit hypergraph states response to the generalized $Z$, $X$ operations, and generalized $Z$, $X$ measurements~\cite{Qu13}.
The initial states are all assumed to be $\left|H_d\right \rangle$ and the final states are all denoted as $|\psi_f\rangle$.

The local unitary operator $Z_k$, which acts upon the $k$th vertex, can also be interpreted as the hyperedge operation $C_{\{k\}}$. So when $Z_k$ is
implemented on $\left|H_d\right \rangle$,
\begin{equation}
\begin{split}
|\psi_f\rangle &=C_{\{k\}}\left|H_d\right \rangle,
\end{split}
\end{equation}
which is also a qudit hypergraph state. Denote the corresponding multi-hypergraph as $H_d'=(V,E')$, then the associated multiplicity function is
\begin{equation}
m_e'=\begin{cases}
\begin{array}[t]{cc}
m_e & e\neq\{k\},\\
m_{\{k\}}+1 \pmod d & e=\{k\}.
\end{array}\end{cases}
\end{equation}
In pictorial representation, the multiplicity of the hyperedge $\{k\}$ increases by 1 (when $m_{\{k\}}=d-1$, the hyperedge cancels, due to Eq. \eqref{eq:Iden}) while the multiplicities of all the other hyperedges do not change.

When $X_k$ is implemented upon $\left|H_d\right\rangle$, the final state is (see the detailed calculation in Appendix~\ref{A2})
\begin{equation}
\label{eqn:Xk}
\begin{split}
|\psi_f\rangle=\underset{e:k \in e}{\prod }C_{{e}\backslash \{k\}}^{m_{e}}|H_d\rangle,    
\end{split}
\end{equation}
which is also a $d$-level hypergraph state. Denote the corresponding multi-hypergraph as $H_d'=(V,E')$, then the associated multiplicity function is
\begin{equation}
\label{eqn:Xk2}
m_e'=\begin{cases}
\begin{array}[t]{cc}
m_e & k\in e,\\
m_e+m_{e\cup\{k\}} \pmod d & k \notin e.
\end{array}\end{cases}
\end{equation}
Equations~\eqref{eqn:Xk} and \eqref{eqn:Xk2} indicate that the multi-hypergraph corresponding to the final state is the one that the multiplicities of all the hyperedges in the form $e\backslash \{k\}$ are added by $m_{e}$ ($k\in e$) with respect to that of $H_d$ (may also subtract a multiple of $d$ in order to keep $m_e\in \{0,1,\cdots,d-1\}$), while the multiplicities of the other hyperedges do not change.

The cyclic group generated by $\left\{X_k, \left.Z_k\right|k\in V\right\}$ preserves the set structure of qudit hypergraph states. For any element in the group, the action on the qudit hypergraph state can be represented by adding related hyperedges to the hypergraph.

For a general $d$-level system, $Z$ and $X$ operators are unitary but usually non-Hermitian. As $Z$ and $X$ both possess orthonormal eigenvectors, one can still define von Neumann measurements of $Z$ and $X$ by associating each eigenvector with a real value. After the measurement, the measured qudit collapses to an eigenstate of $Z$ ($X$). Correspondingly, the system composed of the remaining qudits collapses to a new state.

Suppose we measure $Z_k$ and the vertex collapses to $|i_k\rangle$ (denote that $\hat{\Pi}_{i_k}=|i_k\rangle \langle i_k|$), then the whole system collapses to
\begin{equation}
\begin{split}
\hat{\Pi}_{i_k}|H_d\rangle=&\Big(\underset{e':k\notin e'}{\prod }C_{e'}^{m_{e'}}\Big)\hat{\Pi}_{i_k}\Big(\underset{e:k\in e}{\prod } C_{e}^{m_{e}}\Big)\left|+\right\rangle {}^{\otimes N}_d.
\end{split}
\end{equation}
While when $k\in e$,
\begin{equation}
\begin{split}
\hat{\Pi}_{i_k}C_e=\hat{\Pi}_{i_k}\underset{j_k=0}{\overset{d-1}{\sum }}\hat{\Pi}_{j_k}C_{e\backslash \{k\}}^{j_k}=C_{e\backslash \{k\}}^{i_k}\hat{\Pi}_{i_k},
\end{split}
\end{equation}
so
\begin{equation}
\begin{split}
\hat{\Pi}_{i_k}\underset{e:k\in e}{\prod }C_e^{m_e}\left|+\right\rangle ^{\otimes N}_d=&\underset{e:k\in e}{\prod }C_{e\backslash \{k\}}^{i_k m_e}\hat{\Pi}_{i_k}\left|+\right\rangle ^{\otimes N}_d \\
=&\frac{1}{\sqrt{d}}|i_k\rangle \underset{e:k\in e}{\prod }C_{e\backslash \{k\}}^{i_k m_e}|+\rangle ^{\otimes N-1}_d.
\end{split}
\end{equation}
The final state of the unmeasured part is
\begin{equation}
|\psi_f\rangle=\Big(\underset{e':k\notin e'}{\prod }C_{e'}^{m_{e'}}\Big)\Big(\underset{e:k\in e}{\prod }C_{e\backslash \{k\}}^{i_k m_e}\Big)|+\rangle ^{\otimes N-1}_d,
\end{equation}
which is also a qudit hypergraph state. Denote the multi-hypergraph as $H_d'=(V',E')$, where $V'=\{1,2,\cdots,k-1,k+1,\cdots,N\}$ and $E'$ is the multiset of the hyperedges of $H_d'$, then the associated multiplicity function is
\begin{equation}
\label{eqn:Xk3}
m_e'= m_e +i_k m_{e\cup \{k\}} \pmod d.
\end{equation}
If the measurement breaks some hyperedges, then we can get the remaining hypergraph through the following steps: (i) Delete the measured vertex; (ii) Multiply the multiplicities of the broken hyperedges by $i_k$ (the measurement result); (iii) Make the ``multiplicities'' valid by subtracting some multiple of $d$.

In general, for a 2-level hypergraph state, after a local Pauli-$X_k$ measurement, the unmeasured part collapses to a state that does not correspond to a hypergraph \cite{Qu13}. For the more general $d$-level hypergraph states, the generalized $X_k$ measurements cannot maintain the structure of the set of $d$-level hypergraph states either.

\black The situations demonstrated in this section are interesting and important because qudit hypergraph states' responses to basic local unitary operations and measurements have potential applications in quantum codes and quantum error correction.
\black

\section{Bell non-locality of qudit hypergraph states and the experimental detection}
\label{sec:5}
The exhibition of non-locality by graph states and qubit hypergraph states
is very important and even necessary in many quantum information tasks.
Behind such investigation is the challenging problem of the non-locality of multipartite
entangled states in quantum information theory. It has been proven
that all entangled pure states are non-local, no matter how many particles
there are and how many dimensions each particle contains \cite{Popescu1992293, Yu2012}.
 In particular, a scheme of non-locality exhibition
was provided in an operational manner in Ref. \cite{Popescu1992293}. \black The idea is that any two particles
can be measured to violate Clauser-Horne-Shimony-Holt (CHSH) inequality \cite{CHSH}, in
assistance of measuring the rest particles. Below we discuss how it
works in the scenario of qudit-hypergraph states.

\subsection{Nonlocality exhibition by the CHSH inequality}

\black
For clarity of the discussion, we consider a multi-hypergraph $H_{N,d,m}=\left(V, E\right)$, in which $V=\{1,2,\cdots,N\}$ and
$E=\{V,V,\cdots,V\}$ ($|E|=m$), then the corresponding quantum state is
\begin{equation}
\begin{split}
\left|H_{N,d,m}\right\rangle  & =C_{V}^{m}\left|+\right\rangle_d ^{\otimes N}\label{eq:hyperex}\\
 & =\frac{1}{\sqrt{d^{N}}}\sum_{i_1,\cdots,i_{N}=0}^{d-1}\omega_{d}^{m i_{1}\cdots i_{N}}\left|i_{1}\cdots i_{N}\right\rangle.
\end{split}
\end{equation}
Without losing generality, let the vertices $3,4,\cdots,N$
assist the vertices $1$ and $2$ in exhibiting Bell non-locality
by violating CHSH inequality \footnote{Notice that for the state $\ket{H_{N,d,m}}$, the
exchange of any pair of vertices does not change the state.}. The assistance can be done by projecting the
vertices to their respective $\left|+\right\rangle_d$. After this operation, the state of the rest part (composed of vertices 1 and 2)
becomes
\begin{equation}
\label{eq:tmp1}
\left|H_{N,d,m}^{(2)}\right\rangle =\frac{\mathscr{N}}{d^{N-1}}\sum_{i_{1},i_{2}=0}^{d-1}\Omega_{i_{1}i_{2}}\left|i_{1}i_{2}\right\rangle,
\end{equation}
where $\mathscr{N}$ is the normalization factor, $\Omega_{i_{1}i_{2}}=\sum_{i_{3},\cdots,i_{N}=0}^{d-1}\omega^{m i_{1}i_{2}\cdots i_{N}}_d$ forming the $d\times d$ matrix $\Omega$. 
We state that the remaining two vertices are entangled. The proof can
be done through analyzing the rank of $\Omega$.
Assume $\left|H_{N,d,m}^{(2)}\right\rangle$ is separable, the rank of $\Omega$ should be 1.
But the upper-left $2 \times 2$ submatrix ($i_{1},i_{2}\in\left\{0,1\right\}$) of $\Omega$
\begin{equation}
\tilde{\Omega}=\left(
\begin{array}{cc}
 d^{N-2} & d^{N-2} \\
d^{N-2} & \displaystyle\sum_{i_{3},\cdots,i_{N}=0}^{d-1}\omega^{m i_{3}\cdots i_{N}}\\
\end{array}
\right),
\end{equation}
has a non-zero determinant, therefore $rank(\Omega)\geq2$ \cite{horn2012}, indicating that the rest two
vertices are entangled.

For analyzing the entanglement property and Bell non-locality, it
is more convenient to transform $\left|H_{N,d,m}^{(2)}\right\rangle$ to its Schmidt form
\begin{equation}
\label{eq:tmp2}
\left|H_{N,d,m}^{(2)}\right\rangle =\sum_{\mu=0}^{d-1}c_{\mu}\left|\mu\right\rangle_1 \left|\mu\right\rangle_2,
\end{equation}
where 
$c_{\mu}$ are the Schmidt coefficients, 
and $\left|\mu\right\rangle_1$ and $\left|\mu\right\rangle_2$ are the Schmidt bases for vertex 1 and 2, respectively.
The entanglement of $\left|H_{N,d,m}^{(2)}\right\rangle$ implies that there are more than 1 non-trivial
term in the right hand side of Eq. \eqref{eq:tmp2}. Finally, we can measure vertex 1 on the settings $S_{1}=\sigma_{z}$ and $T_{1}=\sigma_{x}$, and vertex 2 on the settings $S_{2}=\sigma_{z}\cos2t+\sigma_{x}\sin2t$ and $T_{2}=\sigma_{z}\cos2t-\sigma_{x}\sin2t$, where $\sigma_{z}=\left|0\right\rangle \left\langle 0\right|-\left|1\right\rangle \left\langle 1\right|$, $\sigma_{x}=\left|0\right\rangle \left\langle 1\right|+\left|1\right\rangle \left\langle 0\right|$ on respective basis $\left|\mu\right\rangle_1$ and $\left|\mu\right\rangle_2$, and $\tan2t=2c_{0}c_{1}$. The measurement results will disclose the nonlocality by violating the following CHSH inequality \cite{Popescu1992293}
\begin{align}
C=\bigg | E\left(S_{1}S_{2}\Big|\left|+\right\rangle_d ^{\otimes N-2}\right)+E\left(S_{1}T_{2}\Big|\left|+\right\rangle_d ^{\otimes N-2}\right)\nonumber \\
+E\left(T_{1}S_{2}\Big|\left|+\right\rangle_d ^{\otimes N-2}\right)-E\left(T_{1}T_{2}\Big|\left|+\right\rangle_d ^{\otimes N-2}\right)\bigg | & \leq2.\label{eq:tmp3}
\end{align}
More precisely, the left hand side of the above inequality can achieve $2 \sqrt{1+{4 c_0^2 c_1^2}/{\left(c_0^2 + c_1^2\right)^2}}$, such that the \emph{lhv} bound 2 is violated.

\subsection{The prime-dimensional case}

\black
Specially, when the dimension of the qudits is prime ($d\in\mathbb{P}$), $\Omega_{i_{1}i_{2}}$ has a simple analytic form
\begin{equation}\label{eq:Omega}
\Omega_{i_{1}i_{2}}=\begin{cases}
\begin{array}[t]{cc}
d^{N-2} & i_{1}=0 \lor i_{2}=0,\\
d^{N-2}-d (d-1)^{N-3}  & i_{1} \neq 0 \land i_{2} \neq 0.
\end{array}\end{cases}
\end{equation}
In this case, the Schmidt form of $\left|H_{N,d,m}^{(2)}\right\rangle$ is
\begin{equation}
\ket{H_{N,d,m}^{(2)}}=\frac{x_+\ket{0}_1\ket{0}_2+x_-\ket{1}_1\ket{1}_2}{\sqrt{x_+^2+x_-^2}},
\end{equation}
where
\begin{equation}
x_{\pm}=\frac{\lambda\pm \sqrt{\lambda^2+4(d-\lambda)}}{2},
\end{equation}
and
\begin{equation}
\begin{split}
\ket{0}_k&=\frac{1}{N_+}\Big((x_+ -\lambda+1)\ket{0}+\sum_{i=1}^{d-1}{\ket{i}}\Big),\\
\ket{1}_k&=\frac{1}{N_-}\Big((x_- -\lambda+1)\ket{0}+\sum_{i=1}^{d-1}{\ket{i}}\Big),
\end{split}
\end{equation}
with $N_{\pm}=\sqrt{(x_{\pm} -\lambda+1)^2+d-1}$, $k\in{\{1,2\}}$ and 
$\lambda=d-(d-1)^{N-2}/d^{N-3}$.
The Schmidt number of $\ket{H_{N,d,m}^{(2)}}$ is 2, which indicates that the entanglement of vertices 1 and 2 is equivalent to the entanglement of two qubits. The results in the previous paragraph can be applied here directly except that here $\tan2t=2x_+x_- /\left(x_+^2 + x_-^2\right)$. Explicitly, in this case the left-hand side of Eq. \eqref{eq:tmp3} can violate the CHSH inequality \black by an amount of $2 \sqrt{1+{4 x_+^2 x_-^2}/{\left(x_+^2 + x_-^2\right)^2}}$.

Figure \ref{fig3} reveals the violation of CHSH inequality for various combinations of $d$ $(d\in \mathbb{P})$ and $N$ in this measurement scheme. Here, $C$ is always greater than 2, indicating that this measurement scheme can reveal the nonclassical correlation between the vertices.
When $d$ is fixed and $N$ is large (see (a) in Fig. \ref{fig3}), the matrix elements of the normalized $\Omega$  are \black nearly equal, i.e., the normalized quantum state of the remaining vertices is approximately $|+\rangle_d ^{\otimes 2} $, thus $C$ approaches 2 when $N$ goes to infinity. When $N$ is fixed and $d$ increases (see (b) in Fig. \ref{fig3}), $\ket{H_{N,d,m}^{(2)}}$ approaches $(\ket{0} \sum_{i=1}^{d-1}\ket{i}+\sum_{i=1}^{d-1}\ket{i} \ket{0})/\sqrt{2(d-1)}$, which is equivalent to a 2-qubit maximally entangled state, thus $C$ approaches $2\sqrt{2}$ when $d$ goes to infinity.

\begin{figure}[!htb]
\includegraphics[width =8.5 cm]{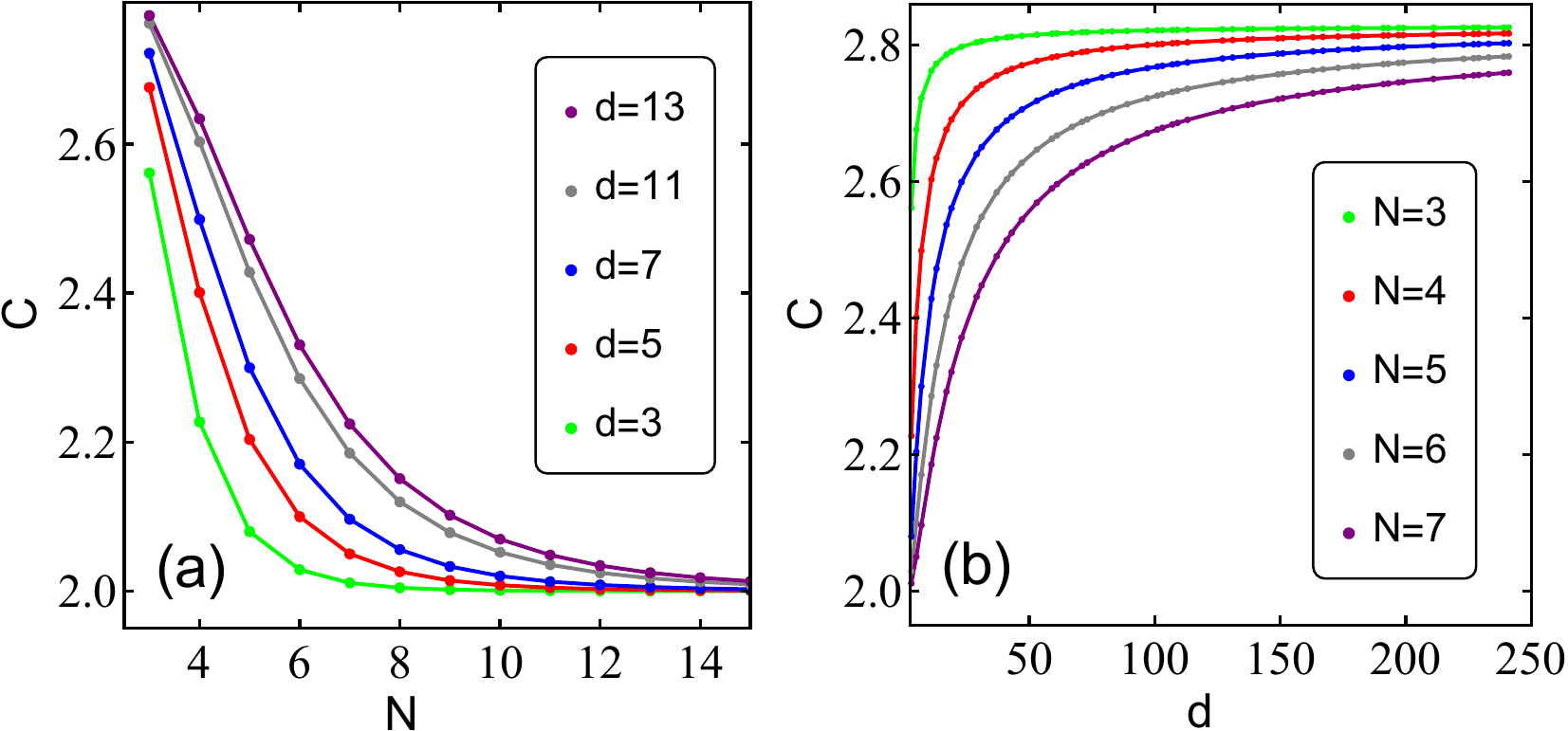}
\caption{Violation of CHSH inequality for various combinations of $d$ and $N$ in our measurement scheme, where $d$ is the dimension of each vertex (notice that $d$ is assumed to be prime) and $N$ is the number of vertices in the multi-hypergraph. The data points are connected for revealing the monotonicity of $C$ with respect to $N$ ($d$). } \label{fig3}
\end{figure}

\subsection{Discussion}

Remarkably, the above scheme of exhibiting the nonlocality of multipartite quantum systems is potentially applicable in the current use of qudit hypergraph states and conventional qubit graph states. In fact, it can be, and in some cases has been, used in practice with current technology. In the case of entanglement verification, it involves only two measurement settings at each side, where the measurement settings of assistant qudits never change. Besides, the CHSH inequality can always reveal the ``strong'' nonlocality, in the sense that the entanglement between arbitrary faraway two qudits can be revealed, as long as the two are connected by other vertices and edges. All these features make the above scheme rather experimentally friendly.

For example, the entanglement verification of cluster states (a special class of graph states) generated by cold-atom lattices is a necessary work for the future use in quantum computing. However, the detection of the entanglement in large-scale cluster states is always a challenging problem \cite{Dai2016}.  From a practical perspective, the CHSH scheme discussed in this section can also be used as an entanglement witness for cluster states, especially for the \textbf{\textit{long-distance}} entanglement. That is, one can always choose two interested particles (connected by other particles with C-phase operations), and test the entanglement correlation between them, no matter how far the two particles are. 

Another example is its application in quantum networks \cite{Cirac1997,Kimble2008}, in which thousands of users complete a quantum information task via a multipartite entangled state.
A typical task is the so-called third-man quantum cryptography in which generation of a cryptographic key is controlled by a third operator who decides whether to activate the key generation or not \cite{Zukowski1998a}. 
The scheme we discussed thus exactly offers an operational way to analyze the security of the third-man quantum cryptography. 

An important problem in qudit hypergraph states we did not discuss is the genuine multipartite entanglement. In particular, the relationship between the classification of multipartite entanglement and the property of hypergraphs deserves to be studied deeply, and the triple entanglement case has been discussed in \cite{steinhoff2016qudit}. As an analog, the concept of genuine multipartite nonlocality is also put forwarded in \cite{steinhoff2016qudit}. However, despite its significance in the theoretical study, its applications in quantum information processing need further study.

\section{Conclusions}
In this work, we have proposed a large class of quantum states named qudit hypergraph states in which every vertex of the multi-hypergraph represents a $d$-level quantum system. We have investigated the operational definition of these states and studied their stabilizers, which possess potential applications in quantum codes and quantum computation. It is shown that generalized local $X$, $Z$ operations, and $Z$ measurements can transform a qudit hypergraph state to another one.

The multi-hypergraphs and qudit hypergraph states have a one-to-one correspondence, and the entanglement of the qudit hypergraph states can be directly illustrated by the structure of their corresponding multi-hypergraphs. If a multi-hypergraph (or part of it) is connected, the corresponding quantum system (the quantum system corresponding to the connected part) is genuinely entangled. Such entanglement leads to potential exhibition of Bell non-locality. As an example, we showed how to obtain the violation of Bell inequality in $N$-uniform qudit hypergraph states. The method is also applicable to other qudit hypergraph states and general $N$-qudit quantum states. 

We also study the relationship among qudit hypergraph states and some important state classes. As for the real equally weighted states, we generalized them to the qudit case according to their form in the computational basis. It is shown that only in the 2-level case are the two state classes (``generalized real equally weighted states'' and ``qudit hypergraph states'') the same; otherwise, qudit hypergraph states are a subclass of ``generalized real equally weighted states''. 
The relationship among qudit hypergraph states, qudit graph states and stabilizer states are discussed. Our results demonstrate that qudit graph states are a common subclass of qudit hypergraph states and stabilizer states. What is more, the union of these two state classes contains more than qudit graph states, which is very similar to the qubit case.
\black

Nevertheless, much work is still needed to be done for the potential properties and applications of qudit hypergraph states. It is known that the set of qubit hypergraph states is the same as the set of \emph{real equally weighted state}s, which is a class of quantum states having important applications in quantum algorithms. Qudit hypergraph states form a subclass of generalized real equally weighted states.  In this sense, it is highly probable that qudit hypergraph states also have important applications in quantum algorithms. It has been shown in the literature that the unique entanglement form and Bell non-locality of qubit hypergraph states have important applications in quantum metrology and novel quantum computation schemes. It is worth further study to see whether the qudit hypergraph states have similar applications. In this paper, we have focused on the simplest definition of entanglement (a quantum state is entangled if it can not be written as a tensor product of two state vectors), while in fact there are much more comprehensive content in the study of multipartite entanglement, for example, equivalent classes of multipartite entanglement. The discussion of such issues in the context of qudit hypergraph states is not only interesting by itself but also essential for the future applications.

\label{sec:6}
\acknowledgements
We thank Y. Y. Zhao, Y. Liu and Y. L. Zheng for the helpful discussions, and F. E. S. Steinhoff for his nice comments. F. L. X. and Z. B. C. were supported by the National Natural Science Foundation of China (Grant No. 61125502) and the CAS. Y. Z. Z., W. F. C., and K. C. were supported by the National Natural Science Foundation of China (Grants No. 11575174) and the CAS.

\vspace{0.2cm}
{\em Note added:} During the preparation of our manuscript, we notice that there is a paper which proposed qudit hypergraph states in a different manner and discussed some different issues \cite{steinhoff2016qudit}.

\begin{appendix}
\section{Derivation of Equation \eqref{eq11}}
If $k \notin e$, $C_e X_k C_e^{\dagger }=X_k$.

If $k \in e$, for simplicity of the discussion, assume that $k=1$ and $e=\{1,\cdots ,n\}$, then $C_e=\sum_{i_1=0}^{d-1}\hat{\Pi}_{i_1} C_{e\backslash \{1\}}^{i_1}$ (Lemma \ref{lem1}), thus
\begin{align}
&C_e X_1 C_e^{\dagger }\nonumber\\
=&\underset{i_1,j_1=0}{\overset{d-1}{\sum }}\hat{\Pi}_{i_1}\bigg(|0\rangle \langle 1|+\ketbra{1}{2}+\cdots +|d-1\rangle \langle 0|\bigg)
\hat{\Pi}_{j_1}C_{e\backslash \{1\}}^{i_1-j_1}\nonumber\\
=&|0\rangle \langle 1|C_{e\backslash \{1\}}^{-1}+|1\rangle \langle 2|C_{e\backslash \{1\}}^{-1}+\cdots +|d-1\rangle \langle 0|C_{e\backslash \{1\}}^{-1}\nonumber\\
=&X_1C_{e\backslash \{1\}}^{-1}=X_1C_{e\backslash \{1\}}^{\dagger}.
\end{align}
Generally, $C_e X_k C_e^{\dagger }=X_k C_{e\backslash \{k\}}^{\dagger}$.

Let $X_k$ pass over all $C_e$, then we have
\begin{equation}
\bigg(\prod _{e\in E} C_e^{m_e}\bigg) X_k \bigg(\prod _{{e'}\in E} C_{e'}^{m_{e'}}\bigg)^{\dagger} =X_k\underset{e:k\in e}{\prod }\left(C_{e\backslash \{k\}}^{\dagger }\right){}^{m_e},
\end{equation}
which is exactly what is demonstrated in Eq. \eqref{eq11}.

\vspace{0.4cm}
\section{Derivation of Equation \eqref{eqn:Xk}}
\label{A2}
For the vertex $e=\{1,2,\cdots,n\}$, denote that $\hat{\Pi}_{i_2 \cdots i_n}=|i_2 \cdots  i_n\rangle \langle i_2 \cdots  i_n|$, then
\begin{align}
&X_1 C_e \nonumber \\
=&X_1\underset{i_2,\cdots ,i_n=0}{\overset{d-1}{\sum }}Z_1^{i_2 \cdots  i_n}\hat{\Pi}_{i_2 \cdots i_n} \nonumber\\
=&\underset{i_2,\cdots ,i_n=0}{\overset{d-1}{\sum }}\omega ^{i_2 \cdots  i_n} Z_1^{i_2 \cdots  i_n} X_1 \hat{\Pi}_{i_2 \cdots i_n}\nonumber\\
=&\underset{i_2,\cdots ,i_n=0}{\overset{d-1}{\sum }}Z_1^{i_2 \cdots  i_n}\hat{\Pi}_{i_2 \cdots i_n}\underset{j_2,\cdots ,j_n=0}{\overset{d-1}{\sum }}\omega ^{j_2 \cdots j_n}\hat{\Pi}_{j_2 \cdots j_n}X_1\nonumber\\
=&C_eC_{e\backslash \{1\}}X_1.
\end{align}

Generally, if $k\in e$, $X_k C_e=C_e C_{e\backslash \{k\}}X_k$, so
\begin{align}
&|\psi_f\rangle\nonumber\\
=&X_k \Big({\prod }C_e^{m_e}\Big)\left|+\right\rangle ^{\otimes N}_d\nonumber\\
=&\Big(\underset{e:k\notin e}{\prod }C_{e}^{m_{e}}\Big) X_k \Big(\underset{e':k\in e'}{\prod }C_{e'}^{m_{e'}}\Big) \left|+\right\rangle ^{\otimes N}_d\nonumber\\
=&\Big(\underset{e:k\notin e}{\prod }C_{e}^{m_{e}}\Big) \Big(\underset{e':k\in e'}{\prod }C_{e'}^{m_{e'}}C_{e'\backslash \left\{k\right\}}^{m_{e'}}\Big) X_k\left|+\right\rangle ^{\otimes N}_d\nonumber\\
=&\Big(\prod C_e^{m_e}\Big)\Big(\underset{e':k\in e'}{\prod }C_{e'\backslash \{k\}}^{m_{e'}}\Big)\left|+\right\rangle ^{\otimes N}_d\nonumber\\
=&\Big(\underset{e':k\in e'}{\prod }C_{e'\backslash \{k\}}^{m_{e'}}\Big)\Big(\prod C_e^{m_e}\Big)\left|+\right\rangle ^{\otimes N}_d\nonumber\\
=&\underset{e:k \in e}{\prod }C_{{e}\backslash \{k\}}^{m_{e}}|H_d\rangle.
\end{align}
\end{appendix}


\bibliographystyle{apsrev4-1}
\bibliography{hypergraph.bib}

\end{document}